\newlist{assertions}{enumerate}{10}
\setlist[assertions]{label={\arabic*}.,ref={\arabic*}}
\crefname{assertionsi}{assertion}{assertions}
\Crefname{assertionsi}{Assertion}{Assertions}
\definecolor{redlink}{rgb}{0.6, 0, 0}
\definecolor{greenlink}{rgb}{0, 0.6, 0}
\definecolor{bluelink}{rgb}{0, 0, 0.6}
\theoremstyle{definition}
\newtheorem{definition}{Definition}[section]
\theoremstyle{plain}
\newtheorem{theorem}{Theorem}[section]
\newtheorem{lemma}{Lemma}[section]
\newtheorem{corollary}{Corollary}[section]
\theoremstyle{remark}
\newtheorem{remark}{Remark}[section]
\newtheorem{claim}{Claim}[section]
\crefname{claim}{claim}{claims}
\newcommand{\eps}{\epsilon} 
\newcommand{\ignore}[1]{} 
\DeclarePairedDelimiter\ceil{\lceil}{\rceil}
\DeclarePairedDelimiter\floor{\lfloor}{\rfloor}
\title{Sparse Euclidean Spanners with Tiny Diameter: \\A Tight Lower Bound}
\author[1]{Hung Le}
\affil[1]{University of Massachusetts Amherst}
\author[2]{Lazar Milenković}
\author[3]{Shay Solomon}
\affil[2,3]{Tel Aviv University}
\begin{document}
\maketitle
\begin{abstract}
In STOC'95 \cite{ADMSS95} Arya et al. showed that any set of $n$ points in $\mathbb R^d$ admits a $(1+\epsilon)$-spanner with hop-diameter at most 2 (respectively, 3) and $O(n \log n)$ edges (resp., $O(n \log \log n)$ edges). They also gave a general upper bound tradeoff of hop-diameter at most $k$ and $O(n \alpha_k(n))$ edges, for any $k \ge 2$.
The function $\alpha_k$ is the inverse of a certain Ackermann-style function at the $\lfloor k/2 \rfloor$th level of the primitive recursive hierarchy, where $\alpha_0(n) = \lceil n/2 \rceil$, $\alpha_1(n) = \left\lceil \sqrt{n} \right\rceil$, $\alpha_2(n) = \lceil \log{n} \rceil$, $\alpha_3(n) = \lceil \log\log{n} \rceil$, $\alpha_4(n) = \log^* n$, $\alpha_5(n) = \lfloor \frac{1}{2} \log^*n \rfloor$, \ldots. Roughly speaking, for $k \ge 2$ the function $\alpha_{k}$ is close to $\lfloor \frac{k-2}{2} \rfloor$-iterated log-star function, i.e., $\log$ with $\lfloor \frac{k-2}{2} \rfloor$ stars. Also, $\alpha_{2\alpha(n)+4}(n) \le 4$, where $\alpha(n)$ is the one-parameter inverse Ackermann function, which is an extremely slowly growing function.

Whether or not this tradeoff is tight has remained open, even for the cases $k = 2$ and $k = 3$. Two lower bounds are known: The first applies only to spanners with stretch 1 and the second is sub-optimal and applies only to sufficiently large (constant) values of $k$. In this paper we prove a tight lower bound for any constant $k$: For any fixed 
$\epsilon > 0$, any $(1+\epsilon)$-spanner for the uniform line metric with hop-diameter at most $k$ must have at least $\Omega(n \alpha_k(n))$ edges.
\end{abstract}
\section{Introduction}
Consider a set $S$ of $n$ points in $\mathbb R^d$ and a real number $t \ge 1$.
A weighted graph $G = (S,E,w)$ in which the weight function is given by the Euclidean distance, 
i.e.,  $w(x,y) =  \|x-y\|$ for each $e =(x,y)\in E$, 
is called a \emph{geometric graph}.
We say that a geometric graph $G$ is a \emph{$t$-spanner} for $S$ if for every pair $p,q \in S$ of distinct points,
there is a path in $G$ between $p$ and $q$ whose {\em weight}
(i.e., the sum of all edge weights in it) 
is at most $t$ times the
Euclidean distance $\|p-q\|$ between $p$ and $q$. 
Such a path 
is called a \emph{$t$-spanner path}. The problem of constructing
Euclidean spanners 
has been studied intensively
over the years \cite{Chew86,KG92,ADDJS93,CK93,DN94,ADMSS95,DNS95,AS97,RS98,AWY05,CG06,DES09tech,SE10,Sol14,ES15,LS19}. 
Euclidean spanners are of importance both in theory and in practice,
as they enable approximation of the complete Euclidean
graph in a more succinct form; in particular, they 
find a plethora of applications, e.g., in geometric approximation algorithms, network topology design, geometric distance oracles, distributed systems, design of parallel machines, and other areas \cite{DN94,LNS98,RS98,GLNS02,GNS05,GLNS08,HP00,MP00}.
We refer the reader to the book by Narasimhan and Smid \cite{NS07}, which provides a thorough account on Euclidean spanners and their applications.

In terms of applications, the most basic requirement from a spanner (besides achieving a small stretch) is to be \emph{sparse},  i.e.,
to have only a small number of edges. 
However, for many applications, the spanner is required to
preserve some additional properties of the underlying complete graph.
One such property, which plays a key role in various applications (such as to routing protocols) \cite{AMS94,AM04,AWY05,CG06,DES09tech,KLMS21},  is the \emph{hop-diameter}:
a $t$-spanner for $S$ is said to have an hop-diameter of $k$ if, for any $p,q \in S$, there is a $t$-spanner path between $p$ and $q$ with at most $k$ edges (or hops). 

\subsection{Known upper bounds}
\paragraph*{1-spanners for tree metrics.}
We denote the tree metric induced by an $n$-vertex (possibly weighted) rooted tree $(T,rt)$ by $M_T$. 
A spanning subgraph $G$ of $M_T$ is said to be a \emph{1-spanner} for $T$, if for every pair of vertices, their distance in $G$ is equal to their distance in $T$.
The problem of constructing 1-spanners for tree metrics is a fundamental one, and has  been studied quite extensively over the years, also in more
general settings, such as planar
metrics \cite{Thor95}, general metrics \cite{Thor92} and general graphs \cite{BGJRW09}. 
This problem is also intimately  related to the extremely well-studied problems of
computing partial-sums and online product queries in semigroup and their variants (see \cite{Tarj79,Yao82,AS87,CR91,PD04,AWY05}, and the references therein).

Alon and Schieber \cite{AS87} and Bodlaender et al.\ \cite{BTS94} showed that 
for any $n$-point tree metric, a 1-spanner with diameter 2 (respectively, 3) and $O(n \log n)$ edges (resp., $O(n \log \log n)$ edges)
can be built within time linear in its size.  
For $k \ge 4$, Alon and Schieber \cite{AS87} showed that 1-spanners with diameter at most $2k$ 
and $O(n \alpha_k(n))$ edges can be built in $O(n \alpha_k(n))$ time. 
The function $\alpha_k$ is the inverse of a certain Ackermann-style function at the $\lfloor k/2 \rfloor$th level of the primitive recursive hierarchy, where $\alpha_0(n) = \ceil*{n/2}$, $\alpha_1(n) = \ceil*{\sqrt{n}}$, $\alpha_2(n) = \ceil{\log{n}}$, $\alpha_3(n) = \ceil{\log\log{n}}$, $\alpha_4(n) = \log^* n$, $\alpha_5(n) = \floor*{\frac{1}{2} \log^*n}$, etc. Roughly speaking, for $k \ge 2$ the function $\alpha_{k}$ is close to $\lfloor \frac{k-2}{2} \rfloor$-iterated log-star function, i.e., $\log$ with $\lfloor \frac{k-2}{2} \rfloor$ stars. Also, $\alpha_{2\alpha(n)+2}(n) \le 4$, where $\alpha(n)$ is the one-parameter inverse Ackermann function, which is an extremely slowly growing function. (These functions are formally defined in \Cref{sec:ackermann}.)
Bodlaender et al.\ \cite{BTS94} constructed 1-spanners 
with diameter at most $k$ and $O(n \alpha_k(n))$ edges, with a high running time.
Solomon~\cite{Sol13} gave a construction that achieved the best of both worlds: a tradeoff of $k$ versus $O(n\alpha_k(n))$ between the hop-diameter and the number of edges in linear time of $O(n\alpha_k(n))$.

Alternative constructions, given by Yao~\cite{Yao82} for line metrics and later extended by Chazelle~\cite{Chaz87} to general tree metrics, achieve a tradeoff of $m$ edges versus $\Theta( \alpha(m,n))$ hop-diameter, where $\alpha(m,n)$ is the two-parameter inverse Ackermann function (see \Cref{def:invAck}). However, these constructions provide 1-spanners
with diameter $\Gamma' \cdot k$ rather than $2k$ or $k$, for some constant $\Gamma' > 30$.

\paragraph*{\bm{$(1+\eps)$}-spanners.~}
In STOC'95 Arya~et~al.~\cite{ADMSS95} proved the so-called ``Dumbbell Theorem'', which states that,
for any $d$-dimensional Euclidean space, a $(1+\eps,O(\frac{\log(1/\eps)}{\eps^d}))$-tree cover 
can be constructed in $O(\frac{\log(1/\eps)}{\eps^d}\cdot n\log{n}+\frac{1}{\eps^{2d}}\cdot n)$ time; see \cref{sec:prelim} for the definition of tree cover. The Dummbell Theorem implies that any construction of 1-spanners for tree metrics can be translated into a construction of Euclidean $(1+\eps)$-spanners. Applying the construction of 1-spanners for tree metrics from \cite{Sol13}, this gives rise to an optimal $O(n \log n)$-time construction (in the algebraic computation tree (ACT) model\footnote{Refer to Chapter 3 in \cite{NS07} for the definition of the ACT model. A matching lower bound of $\Omega(n \log n)$ on the time needed to construct Euclidean spanners is given in \cite{CDS01}.}) of Euclidean $(1+\eps)$-spanners.
This result  can be generalized (albeit not in the ACT model) for the wider family of {\em doubling metrics}, by using the tree cover theorem of Bartal~et~al.~\cite{BFN19}, which generalizes the Dumbbell Theorem of \cite{ADMSS95} for arbitrary doubling metrics.

\subsection{Known lower bounds}
The first lower bound on 1-spanners for tree metrics was given by Yao~\cite{Yao82} and it establishes a tradeoff of $m$ edges versus hop-diameter of $\Omega(\alpha(m,n))$ for the uniform line metric.
Alon~and~Schieber~\cite{AS87} gave a stronger lower bound on 1-spanners for the uniform line metric: hop-diameter $k$ versus $\Omega(n\alpha_k(n))$ edges, for any $k$; it is easily shown that this lower bound implies that of \cite{Yao82} 
(see \Cref{lemma:tradeoff}),
 but the converse is not true.

The above lower bounds apply to 1-spanners. There is also a lower bound on $(1+\eps)$-spanners that applies to line metrics, by Chan and Gupta \cite{CG06}, 
which extends that of \cite{Yao82}: $m$ edges versus hop-diameter of $\Omega(\alpha(m,n))$. 
As mentioned already concerning this tradeoff, it only provides a meaningful bound for sufficiently large values of hop-diameter (above say 30), and it does not apply to hop-diameter values that approach 1, which is the focus of this work.
More specifically, it can be used to show that any $(1+\eps)$-spanner for a certain line metric with hop-diameter at most $k$ must have $\Omega(n \alpha_{2k+6} (n))$ edges. When $k=2$ (resp. $k=3$), this gives $\Omega(n\log^{****}{n})$ (resp. $\Omega(n\log^{*****}{n})$) edges,
which is   far from the upper bound of $O(n \log n)$ (resp., $O(n \log\log n)$). 
Furthermore, the line metric used in the  proof of \cite{CG06} is not as basic as the uniform line metric --- it is derived from hierarchically well-separated trees (HSTs), and to achieve the result for line metrics, an embedding from HSTs to the line with an appropriate separation parameter is employed. The resulting line metric is very far from a uniform one and its aspect ratio\footnote{The {\em aspect ratio} of a metric is the ratio of the maximum pairwise distance to the minimum one.} depends on the stretch --- it will be super-polynomial whenever $\eps$ is sufficiently small or sufficiently large; 
of course, the aspect ratio of the uniform line metric (which is the metric used by \cite{Yao82,AS87}) is linear in $n$. 
As point sets arising in real-life applications (e.g., for various random distributions) have polynomially
bounded aspect ratio, it is natural to ask whether one can achieve a lower bound for a point set of polynomial aspect ratio.

\subsection{Our contribution}
We prove that any $(1+\eps)$-spanner for the uniform line metric with hop-diameter $k$ must have at least $\Omega(n\alpha_k(n))$ edges, for any constant $k \ge 2$.

\begin{theorem}\label{thm:main}
For any positive integer $n$, any integer $k\ge2$ and any $\eps \in [0, 1/2]$, any $(1+\eps)$-spanner  with hop-diameter $k$ for the uniform line metric with $n$ points must contain at least $\Omega(\frac{n}{2^{6\floor{k/2}}} \alpha_{k}(n))$ edges.
\end{theorem}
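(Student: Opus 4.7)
The plan is to prove \Cref{thm:main} by induction on $k$, adapting the Alon--Schieber lower bound technique for 1-spanners on the uniform line metric so that it tolerates the extra slack introduced by $(1+\epsilon)$-stretch. Throughout, I identify the uniform line metric with the point set $\{1,2,\ldots,n\}$.

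The first ingredient is a rigidity lemma about $(1+\epsilon)$-paths on the uniform line: any $k$-hop path $i = v_0,v_1,\ldots,v_k = j$ with $i<j$ satisfies $\sum_{r=0}^{k-1} |v_{r+1}-v_r| \le (1+\epsilon)(j-i)$, so the total ``backtracking'' is at most $\epsilon(j-i)/2$. Hence every $v_r$ lies in the enlarged interval $[i-\epsilon(j-i)/2,\; j+\epsilon(j-i)/2]$, and at most an $\epsilon$-fraction of the hops are ``wasted'' on backward motion. This structural control is what will allow us to imitate the combinatorial arguments that are available for free in the 1-spanner setting.

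For the base case $k=2$, any $(1+\epsilon)$-path is of the form $i$--$m$--$j$ with $|i-m|+|m-j|\le (1+\epsilon)(j-i)$, which forces $m$ into a $O(\epsilon)$-neighborhood of $[i,j]$. I would then run the standard dyadic bucketing argument: partition the line into nested intervals at geometric scales $2^0,2^1,\ldots,2^{\log n}$, and show that at each scale $\ell$ the spanner must contain a distinct set of ``hubs'' covering pairs of distance $\Theta(2^\ell)$; the rigidity lemma allows at most a constant-factor reuse of hubs between scales when $\epsilon\le 1/2$, giving the bound $\Omega(n\log n/2^{6})$. For the inductive step, I would partition $\{1,\ldots,n\}$ into roughly $n/\alpha_{k-1}(n)$ contiguous blocks and argue that, by rigidity, the restriction of the spanner to a slightly enlarged version of each block must be a $(1+O(\epsilon))$-spanner with hop-diameter at most $k-1$ for that block; applying the inductive hypothesis and summing over blocks, then adding the $\Omega(n)$ top-level edges needed to connect the blocks, produces the Ackermann-type recursion that defines $\alpha_k$, with a constant-factor loss per level accounting for the $2^{-6\lfloor k/2\rfloor}$ factor.

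The principal obstacle is that, unlike a 1-spanner shortest path, a $(1+\epsilon)$-spanner path need not stay inside its block: it can leak into neighbouring blocks to save hops. My plan is to absorb this leakage by defining each block together with a constant-size ``halo'' of adjacent blocks, and to design a charging scheme that assigns every cross-block edge to $O(1)$ blocks. The constant blow-up per level of the recursion is exactly what produces the $2^{6\lfloor k/2\rfloor}$ in the denominator, so the bookkeeping must be sharp enough that this blow-up remains constant at each level, yet loose enough that the inductive hypothesis can be invoked on each enlarged block as a genuine $(1+O(\epsilon))$-spanner. Making that trade-off quantitatively consistent with $\epsilon\in[0,1/2]$ is where the bulk of the technical work will lie.
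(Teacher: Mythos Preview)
There is a structural gap in your inductive step. You assert that the restriction of $H$ to (a haloed version of) a block is a $(1+O(\epsilon))$-spanner with hop-diameter at most $k-1$, but rigidity confines paths geographically; it does not reduce the number of hops. A $k$-hop $(1+\epsilon)$-path between two points in the same block may well use all $k$ hops while staying near the block. Consequently your recursion $T_k(n)\approx (n/\alpha_{k-1}(n))\,T_{k-1}(\alpha_{k-1}(n))+\Omega(n)$ is unavailable, and in any case it does not solve to $n\alpha_k(n)$: the defining identity is $\alpha_k(n)=1+\alpha_k(\alpha_{k-2}(n))$ (\Cref{lemma:alphakstep}), which already tells you the block size should be $\alpha_{k-2}(n)$ and that the per-block induction should be on $n$ with the \emph{same} $k$, not on $k$.

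The paper's argument has this shape: partition $[1,n]$ into blocks of size $b=\alpha_{k-2}(n)$; the induction hypothesis for the same $k$ on each block gives $T_k(b)$ edges entirely inside it, and these edge sets are disjoint thanks to a separation lemma (for $\epsilon\le 1/2$, any $(1+\epsilon)$-path between points in the middle half of an interval stays inside that interval --- this replaces your halo/charging scheme). What remains is to exhibit $\Omega(n)$ \emph{cross} edges, and this is where $k-2$ enters. Call a point \emph{global} if it is incident to a cross edge. If a constant fraction of the middle points are global, $\Omega(n)$ cross edges follow trivially. Otherwise most blocks contain a non-global point; for two such points $a,b$ in different blocks, the first and last hops of any $k$-hop $(1+\epsilon)$-path from $a$ to $b$ must stay in their respective blocks, so the inner $k-2$ hops form a $(k-2)$-hop spanner on the set of block representatives. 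The crucial difficulty --- and the idea your proposal is missing --- is that this set of representatives is \emph{not} a uniform line metric: some blocks may lack non-global points entirely. The paper therefore strengthens the induction hypothesis to Steiner spanners over $t$-sparse line metrics and over subspaces of them with a $(1-2^{-(k+4)})$-fraction of points present, counting only edges that cross $t$-interval boundaries. Without this strengthening the induction cannot close, and it is this loss --- not leakage into neighbouring blocks --- that produces the $2^{-6\lfloor k/2\rfloor}$ factor.
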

Interestingly, our lower bound applies also to any $\eps > 1/2$, where the bound on the number of edges reduces linearly with $\eps$, i.e., it becomes $\Omega(n\alpha_k(n) /\eps)$.
We stress that our lower bound instance, namely the uniform line metric, does not depend on $\eps$, and the lower bound that it provides holds {\em simultaneously for all values of $\eps$}.

Although our lower bound on the number of edges coincides with $\Omega(n\alpha_k(n))$ only for constant  $k$, we note that the values of $k$ of interest range between $1$ and $O(\alpha(n))$,
where $\alpha(\cdot)$ is a very slowly growing function, e.g., $\alpha(n)$ is asymptotically much smaller than $\log^*n$; we formally define $\alpha(n)$ in \Cref{sec:ackermann}.
Indeed, as mentioned, for $k=2\alpha(n)+4$, we have $\alpha_{2\alpha(n) + 4}(n) \le 4$, 
and clearly any spanner must have $\Omega(n)$ edges.
Thus the gap between our lower bound on the number of edges and $\Omega(n\alpha_k(n))$,
namely, a multiplicative factor of $2^{6\floor{k/2}}$, which in particular is no greater than $2^{O(\alpha(n)}$, is very small.

For technical reasons we prove a more general lower bound, stated in \Cref{thm:k}. In particular, we need to consider a more general notion of Steiner spanners\footnote{A Steiner spanner for a point set $S$ is a spanner that may contain additional Steiner points (which do not belong to $S$). Clearly, a lower bound for Steiner spanners also applies to ordinary spanners.}, and to prove the lower bound for a certain family of line metrics to which the uniform line metric belongs;
\Cref{thm:main} follows directly from \Cref{thm:k}. See \Cref{sec:prelim} for the definitions.

For constant values of k, \Cref{thm:main} strengthens the lower bound shown by \cite{AS87}, which applies only to stretch 1, whereas our tradeoff holds for arbitrary stretch. Whether or not the term $\frac{1}{2^{6\floor{k/2}}}$ in the bound on the number of edges in \Cref{thm:main} can be removed is left open by our work. As mentioned before, we show in \Cref{sec:tradeoff} that this tradeoff implies the tradeoff by \cite{Yao82} (for stretch 1) and \cite{CG06} (for larger stretch).

\subsection{Proof Overview}

The starting point of our lower bound argument is the one by \cite{AS87} for stretch 1, which applies to the uniform line metric, $U(n)$. The argument of \cite{AS87} crucially relies on the following {\em separation} property: for any 4 points $i,j,i',j'$ in $U(n)$ such that $i \le j < i' \le j'$, any 1-spanner path between $i$ and $j$ is completely disjoint to any 1-spanner path between $i'$ and $j'$. Thus, for any two subintervals that do not overlap --- and are thus separated in the most basic sense, any 1-spanner path between a pair of points in one of the sub-intervals does not overlap any 1-spanner path in the other sub-interval. With this separation property in mind, a rather straightforward inductive argument can be employed. Consider for concreteness the easiest case $k = 2$. The number of spanner edges satisfies the recurrence $T_2(n) = 2T_2(n/2) + \Omega(n)$. Indeed, by induction, the number of spanner edges for the left $n/2$ points is at least $T_2(n/2)$, and the same goes for the right $n/2$ points, and importantly the corresponding sets of edges are disjoint by the separation property. All is left is to reason that there are $\Omega(n)$ \emph{cross edges}, which are the edges whose one endpoint belongs to the left $n/2$ points and the other to the right $n/2$ points; the separation property implies that the cross edges are disjoint to the aforementioned edge sets. Alas, when the stretch grows from 1 to $1+\eps$, even for a tiny $\eps > 0$, the basic separation property no longer holds, so it is possible to have significant overlaps between the edge sets corresponding to the recursive calls of the left half and right half of the points, and between them and the cross edges. Of course, as $k$ increases, such issues regarding overlaps become even more challenging. To overcome these issues, \cite{CG06} resorted to a different metric space, which is far from a uniform line metric to an extent that renders such issues of overlapping negligible. As mentioned before, the resulting metric of \cite{CG06} has a super-polynomial aspect ratio whenever $\eps$ is sufficiently small or sufficiently large. Our argument, on the other hand, applies to the uniform line metric; coping with overlaps issues in this case is nontrivial. Next, we highlight some of the challenges that our argument overcomes.

The easiest case is $k=2$.
We view the line metric $U(1,n)$ as an interval $[1,n]$ of all integers from $1$ to $n$, and also consider sub-intervals $[i,j]$ of all integers from $i$ to $j$, for all $1 \le i<j \le n$.
Our first observation is that one can achieve a similar separation property as in the case of stretch 1 by, roughly speaking, restricting the attention to a subset of $\Omega(n)$ points that are sufficiently far from the boundaries of sub-intervals. More concretely, for $\eps = 1/2$, consider the $n/2$ points closest to $n/2$ (i.e., the sub-interval $[n/4, 3n/4]$):  While these $n/2$ points still induce $\Omega(n)$ cross edges, note that any $(1+\eps)$-spanner path between any pair of such points is contained in $[1,n]$. To achieve the required separation property, we would like to apply this observation inductively. 
However, a naive application is doomed to failure.
Indeed, if we used the induction hypothesis only on the points inside $[n/4,3n/4]$, i.e., on the points of $[n/4, n/2]$ as the left interval and the points of $[n/2+1,3n/4]$
as the right interval, the number of cross edges would degrade by a factor of 2. Losing a factor of 2 at each recursion level cannot give a lower bound larger than $\Omega(n)$. Instead,
we apply the induction hypothesis on all $n/2$ points of the left half and on all $n/2$ points of the right half, but restrict the attention to the $n/4$ points that are closest to $n/4$ in the left half and to the $n/4$ points that are closest to $3n/4$ in the right half, which gives us $n/2$ points in total, just as with the first recursion level. In this way at each level of recursion we restrict the attention to a different set of points, but of the same size $n/2$, and in this way avoid the loss.
The case $k=3$ is handled similarly, but is more intricate --- primarily since the basic argument of \cite{AS87} (for stretch 1) for $k = 3$ is more intricate than for $k = 2$. 
We refer the reader to the proof of \Cref{lemma:lb2} (resp. \Cref{lemma:lb3}) for the  proof of the case $k=2$ (resp. $k=3$).

For $k\ge 4$, the argument is considerably more involved. 
As with \cite{AS87}, we divide the interval $[1,n]$ into consecutive sub-intervals of size $\alpha_{k-2}(n)$ --- the goal is to show that the number of spanner edges satisfies the recurrence $T_k(n) = (n/\alpha_{k-2}(n)) \cdot T_k(\alpha_{k-2}(n)) + \Omega(n)$, implying that $T_k(n) = \Omega(n\alpha_k(n))$. Using the separation property as in the case $k=2$, employing the induction hypothesis on each of the $n/\alpha_{k-2}(n)$ subintervals yields the term $(n/\alpha_{k-2}(n)) \cdot T_k(\alpha_{k-2}(n))$. The crux is to show that there are $\Omega(n)$ cross edges, now defined as edges with endpoints in different sub-intervals. 
To ensure that the separation property holds, as in the case $k = 2$, we consider only the $n/2$ points of $[n/4, 3n/4]$. We distinguish between \emph{global} and \emph{non-global} points: A point is called global if it is incident on at least one cross edge, and non-global otherwise. If
a constant fraction of points in $[n/4,3n/4]$ are global, 
we must have $\Omega(n)$ cross edges by definition. The complementary case, where a constant fraction of points in $[n/4,3n/4]$ are non-global, is the interesting one.
We'd like to use the induction hypothesis for $k-2$, as in \cite{AS87}, to reason that the number of cross edges induced by the non-global points is at least $T_{k-2}(n/\alpha_{k-2}(n)) = \Omega(n)$. The problem is that the non-global points induce a line metric that is {\em not  uniform}, hence we cannot apply the induction hypothesis --- overcoming this obstacle is the key difficulty in our argument. 
To be able to apply the induction hypothesis, we prove all the results on a generalized line metric. Consider first what we call a {\em $t$-sparse line metric}, which contains a single point in every consecutive sub-interval of size $t$. 
Such a line metric would suffice for applying the induction hypothesis (with hop-diameter $k-2$), {\em if all sub-intervals inside $[n/4,3n/4]$ contained at least one non-global point}. In this special case, we could apply the induction hypothesis on an $\alpha_{k-2}(n)$-sparse line metric, and if we're always in this special case, we'll always have a $t$-sparse line metric for a growing parameter $t$, and it is not difficult to show that the induction step will work fine. Alas, we only know that there is a constant fraction of points around the middle of the interval and the intervals containing them form a subspace of a $t$-sparse line metric, which is possibly very different than this special case; thus, in general, we are unable to apply the induction hypothesis in such a way. 
To overcome this hurdle, we prove a stronger, more general lower bound, which concerns subspaces of $t$-sparse line metrics, where a constant fraction of the points is missing. On the bright side, such generalized spaces 
provide the required flexibility for carrying out the induction step.
On the negative side, each invocation of the induction hypothesis with a smaller value of $k$ incurs some multiplicative loss in the number of considered points, yielding the exponential on $k$ slack in our lower bound. 
We refer the reader to the proof of \Cref{thm:k}.

\section{Preliminaries}\label{sec:prelim}
\begin{definition}[Tree covers]
Let $M_X = (X, \delta_X)$ be an arbitrary metric space.
We say that a weighted tree $T$ is a {\em dominating} tree for $M_X$ if $X \subseteq V(T)$  and it holds that $\delta_T(x,y) \ge \delta_X(x,y)$, for every $x,y\in X$.
For $\gamma \ge 1$ and an integer $\zeta\ge1$, a {\em $(\gamma, \zeta)$-tree cover} of  $M_X = (X,\delta_X)$ is a collection of $\zeta$ {\em dominating trees} for $M_X$, such that for every $x,y \in X$, there exists a tree $T$ with $d_T(u,v) \le \gamma \cdot \delta_X(u,v)$;
we say that the {\em stretch} between $x$ and $y$ in $T$ is at most $\gamma$,
and the parameter $\gamma$ is referred to as the {\em stretch} of the tree cover. 
\end{definition}

\begin{definition}[Uniform line metric]
A uniform line metric $U=(\mathbb{Z}, d)$ is a metric on a set of integer points such that the distance between two points $a,b \in \mathbb{Z}$, denoted by $d(a,b)$ is their Euclidean distance, which is $|a-b|$.
For two integers $l,r\in\mathbb{Z}$, such that $l \le r$, we define a uniform line metric on an interval $[l, r]$, denoted by $U(l, r)$, as a subspace of $U$ consisting of all the integer points $k$, such that $l \le k \le r$. We use $U(n)$ to denote a uniform line metric on the interval $[1,n]$.
\end{definition}

Although we aim to prove the lower bound for uniform line metric, the inductive nature of our argument requires several generalizations of the considered metric space and spanner. 

\begin{definition}[$t$-sparse line metric]\label{def:tsparse}
Let $l$ and $r$ be two integers such that $l<r$. We call metric space $U((l,r),t)$ $t$-sparse if:
\begin{itemize}
\item It is a subspace of $U(l,r)$. 
\item Each of the consecutive intervals of $[l,r]$ of size $t$ ($[l, l+t], [l+t+1, l+2t], \dots$) contains exactly one point. These intervals are called $((l,r),t)$-intervals and the point inside each such interval is called \emph{representative} of the interval.
\end{itemize}
\end{definition}

\emph{Throughout the paper, we will always consider Steiner spanners that can contain arbitrary points from the uniform line metric.}

\begin{definition}[Global hop-diameter]\label{def:spanner} For any two integers $l,r$ such that $r=l+nt-1$, let $U((l,r),t)$ be a $t$-sparse line metric with $n$ points and let $X$ be a subspace of $U((l,r),t)$. An edge that connects two points in $U((l,r),t)$ is \emph{$((l,r),t)$-global} if it has endpoints in two different $((l,r),t)$-intervals of $U((l,r),t)$. 
A spanner on $X$ with stretch $(1+\eps)$ has its \emph{$((l,r),t)$-global hop-diameter} bounded by $k$ if every pair of points in $X$ has a path of stretch at most $(1+\eps)$ consisting of at most $k$ $((l,r),t)$-global edges.
\end{definition}

For ease of presentation, we focus on $\eps \in [0, 1/2]$, as this is the basic regime.
Our argument naturally extends to any $\eps>1/2$, with the lower bound degrading by a factor of $1/\eps$.

\begin{lemma}[Separation property]\label{lemma:sep}
Let $l, r, t \in \mathbb{N}$, $l\le r$, $t\ge 1$ and let $i \coloneqq \ceil{\frac{1+\eps/2}{1+\eps}l + \frac{\eps/2}{1+\eps}r}$, and $j \coloneqq \floor{\frac{\eps/2}{1+\eps}l + \frac{1+\eps/2}{1+\eps}r}$.
Let $a, b$ be two points in $U((l,r),t)$ such that $i\le a < b \le j$. Then, any $(1+\eps)$-spanner path between $a$ and $b$ contains points strictly inside $[l,r]$.
\end{lemma}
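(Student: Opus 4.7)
The plan is to argue by contradiction: suppose some $(1+\eps)$-spanner path $P$ from $a$ to $b$ uses a vertex $p$ lying outside the open interval $(l,r)$, and show that this forces $\mathrm{weight}(P) > (1+\eps)(b-a)$, contradicting the stretch bound. Without loss of generality, I would take $p \le l$; the case $p \ge r$ is handled symmetrically at the end.

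The first step is a triangle-inequality bound on the weight of $P$: since $p$ appears on the path,
\[
\mathrm{weight}(P) \ge |a-p| + |p-b| = (a-p)+(b-p) \ge (a-l)+(b-l) = 2(a-l)+(b-a),
\]
with strict inequality whenever $p<l$. To produce a contradiction, it therefore suffices to prove the weak inequality $a - l \ge \tfrac{\eps}{2}(b-a)$.

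The second (and main) step is an arithmetic calculation using the ceilings/floors in the definitions of $i$ and $j$. Writing $c_1 := \tfrac{1+\eps/2}{1+\eps}$ and $c_2 := \tfrac{\eps/2}{1+\eps}$, so that $c_1+c_2=1$, the ceiling only pushes $i$ upward and the floor only pushes $j$ downward, giving
\[
i - l \ge c_2(r-l),\qquad r-j \ge c_2(r-l),\qquad j-i \le (c_1-c_2)(r-l) = \tfrac{r-l}{1+\eps}.
\]
Combining the first and third bounds with $a \ge i$ and $b \le j$ yields $a - l \ge c_2(r-l) = \tfrac{\eps}{2}\cdot\tfrac{r-l}{1+\eps} \ge \tfrac{\eps}{2}(j-i) \ge \tfrac{\eps}{2}(b-a)$, which is exactly what is needed. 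The symmetric case $p \ge r$ uses $r-j \ge c_2(r-l)$ in place of $i-l \ge c_2(r-l)$ to obtain $r-b \ge \tfrac{\eps}{2}(b-a)$. I expect the only real hurdle to be bookkeeping: aligning the weak inequalities coming from the ceiling/floor slack with the strict inequality supplied by $p \notin [l,r]$, so that the final comparison $\mathrm{weight}(P) > (1+\eps)(b-a)$ is strict and the contradiction goes through cleanly.
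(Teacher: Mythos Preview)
Your proposal is correct and follows essentially the same approach as the paper: both argue by contradiction, lower-bound the path weight by $(b-a) + 2\times(\text{overshoot past the boundary})$, and then use the definitions of $i,j$ to show the overshoot term exceeds $\eps(b-a)$. The paper treats the symmetric case $q>r$ and is slightly looser with the ceiling/floor (writing $2(r-j)=\tfrac{\eps}{1+\eps}(r-l)$ and $j-i=\tfrac{1}{1+\eps}(r-l)$ as equalities), whereas you correctly track these as inequalities; your final observation that the needed \emph{strict} inequality comes from $p\notin[l,r]$ (not merely $p\notin(l,r)$) is exactly right and matches what the paper actually proves.
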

\begin{proof}
Consider a spanner path between $a$ and $b$ which contains a point $q$ outside $U((l,r),t)$ such that $q > r$. (A similar argument holds for $q < l$.)
The length of any such path is at least $(b-a) + 2(q-b)$. Since $b \le j < r < q$, it holds that $2(q-b) > 2(r-j) =\frac{\eps}{1+\eps}(r-l)$.
On the other hand, since $i \le a < b \le j$, the distance between $a$ and $b$ is at most $b-a \le j-i = \frac{1}{1+\eps}(r-l)$.
The last two inequalities imply that $2(q-b) > \eps(b-a)$. It follows that the spanner path between $a$ and $b$ containing $q$ is of length greater than $(1+\eps)(b-a)$, i.e., it has a stretch bigger than $(1+\eps)$.
\end{proof}

\begin{corollary}\label{cor:sep}
For every integer $N\ge 34$ and any $t$-sparse line metric $U((1,N),t)$, any spanner path with stretch at most $3/2$ between metric points $a$ and $b$ such that $\floor{N/4} \le a \le b \le \ceil{3N/4}$ contains points strictly inside $[1,N]$.
\end{corollary}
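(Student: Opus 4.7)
The corollary is a direct specialization of \Cref{lemma:sep} to the parameters $\eps = 1/2$, $l = 1$, $r = N$, so the plan is just to check that the points in the range $[\lfloor N/4 \rfloor, \lceil 3N/4 \rceil]$ lie within the tighter window $[i, j]$ produced by that lemma.

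First I would substitute $\eps = 1/2$, $l = 1$, $r = N$ into the definitions of $i$ and $j$ in \Cref{lemma:sep}, obtaining
\[
i = \left\lceil \tfrac{5}{6} + \tfrac{N}{6} \right\rceil, \qquad j = \left\lfloor \tfrac{1}{6} + \tfrac{5N}{6} \right\rfloor.
\]
Then I would show that for $N \ge 34$ we have $i \le \lfloor N/4 \rfloor$ and $j \ge \lceil 3N/4 \rceil$. For the first inequality, using $\lceil 5/6 + N/6\rceil \le N/6 + 11/6$ and $\lfloor N/4 \rfloor \ge N/4 - 1$, it suffices that $N/4 - N/6 \ge 17/6$, i.e., $N \ge 34$. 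For the second, using $\lceil 3N/4 \rceil \le 3N/4 + 1$ and $\lfloor 1/6 + 5N/6 \rfloor \ge 5N/6 - 5/6$, it suffices that $5N/6 - 3N/4 \ge 11/6$, i.e., $N \ge 22$; both conditions are met when $N \ge 34$.

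Having established these two inequalities, any $a, b$ with $\lfloor N/4 \rfloor \le a \le b \le \lceil 3N/4 \rceil$ automatically satisfy $i \le a \le b \le j$. Invoking \Cref{lemma:sep} (with $\eps = 1/2$, which matches the stretch bound $3/2 = 1 + \eps$), every $(1+\eps)$-spanner path between $a$ and $b$ is forced to remain strictly inside $[1, N]$, which is exactly the claim. There is no real obstacle here — the only thing to be careful about is propagating the ceiling and floor slack through the arithmetic, which is why the hypothesis needs to be $N \ge 34$ rather than a smaller constant.
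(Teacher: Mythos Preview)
Your proposal is correct and is exactly the natural derivation the paper leaves implicit: the corollary is stated without proof as an immediate consequence of \Cref{lemma:sep}, and you have simply carried out the substitution $\eps=1/2$, $l=1$, $r=N$ and verified the floor/ceiling arithmetic that makes $[\lfloor N/4\rfloor,\lceil 3N/4\rceil]\subseteq[i,j]$ once $N\ge 34$. There is nothing to add.
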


\subsection{Ackermann functions}\label{sec:ackermann}
Following standard notions \cite{Tarjan75,AS87,Chaz87,NS07,Sol13}, we will introduce two very rapidly growing functions $A(k, n)$ and $B(k, n)$, which are variants of Ackermann's function. Later, we also introduce several inverses and state their properties that will be used throughout the paper.
\begin{definition}
For all $k \ge 0$, the functions $A(k,n)$ and $B(k,n)$ are defined as follows:
\begin{align*}
A(0, n) &\coloneqq 2n, \text{\emph{ for all }} n \ge 0,\\
A(k, n) &\coloneqq 
\begin{cases}
1 &\text{\emph{ if }} k \ge 1 \text{\emph{ and }} n = 0\\
A(k-1, A(k, n-1)) & \text{\emph{ if }} k \ge 1 \text{\emph{ and }} n \ge 1\\
\end{cases}\\
B(0, n) &\coloneqq n^2, \text{\emph{ for all }} n \ge 0,\\
B(k, n) &\coloneqq 
\begin{cases}
2 &\text{\emph{ if }} k \ge 1 \text{\emph{ and }} n = 0\\
B(k-1, B(k, n-1)) & \text{\emph{ if }} k \ge 1 \text{\emph{ and }} n \ge 1\\
\end{cases}
\end{align*}
\end{definition}
We now define the functional inverses of $A(k,n)$ and $B(k,n)$.
\begin{definition}
For all $k \ge 0$, the function $\alpha_k(n)$ is defined as follows:
\begin{align*}
 \alpha_{2k}(n) &\coloneqq \min\{s \ge 0: A(k, s) \ge n\},\text{\emph{ for all }} n\ge 0\text{\emph{, and}}\\
	 \alpha_{2k+1}(n) &\coloneqq \min\{s \ge 0: B(k, s) \ge n\},\text{\emph{ for all }} n\ge 0,\text{{for all }} n\ge 0.
\end{align*}
\end{definition}
For technical convenience we define $\log{x} = 0$ for any $x \le 0$, $x \in \mathbb{R}$. All the logarithms are with base 2.
It is not hard to verify that $\alpha_0(n) = \lceil n/2\rceil$,
$\alpha_1(n) = \lceil \sqrt{n} \rceil$,
$\alpha_2(n)= \lceil\log{n}\rceil$,
$\alpha_3(n)= \lceil\log\log{n}\rceil$,
$\alpha_4(n)= \log^*{n}$,
$\alpha_5(n)= \lfloor \frac{1}{2}\log^*{n} \rfloor$, etc. 
We will use the following property of $\alpha_k(n)$.
\begin{lemma}[cf. Lemma 12.1.16. in \cite{NS07}]\label{lemma:alphakstep}
For each $k\ge 1$, we have:\\
$\alpha_{2k}(n) = 1 + \alpha_{2k}(\alpha_{2k-2}(n))$, for all $n\ge 2$, and\\
$\alpha_{2k+1}(n) = 1 + \alpha_{2k+1}(\alpha_{2k-1}(n))$, for all $n\ge 3$.
\end{lemma}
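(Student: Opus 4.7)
The plan is to derive both identities directly from the recursive definitions of $A$ and $B$, aided only by the fact that $A(k-1,\cdot)$ and $B(k-1,\cdot)$ are strictly increasing. First I would verify the monotonicity lemma: for every $k\ge 0$, the maps $n\mapsto A(k,n)$ and $n\mapsto B(k,n)$ are strictly increasing (indeed, $A(k,n)\ge 2n$ and $B(k,n)\ge n^2$ for $n\ge 1$). This is a routine induction on $k$ from the base cases $A(0,n)=2n$, $B(0,n)=n^2$, together with the recurrences.

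For the even identity, fix $n\ge 2$ and set $s:=\alpha_{2k}(n)$, the least $s\ge 0$ with $A(k,s)\ge n$. Since $A(k,0)=1<2\le n$, we have $s\ge 1$, so we may unfold the recurrence:
\[
A(k,s)\;=\;A(k-1,A(k,s-1))\;\ge\;n.
\]
By monotonicity of $A(k-1,\cdot)$, this is equivalent to $A(k,s-1)\ge \alpha_{2k-2}(n)$, whence $s-1\ge \alpha_{2k}(\alpha_{2k-2}(n))$ by the definition of $\alpha_{2k}$. For the reverse inequality, let $s':=\alpha_{2k}(\alpha_{2k-2}(n))$. Then $A(k,s')\ge\alpha_{2k-2}(n)$, and applying $A(k-1,\cdot)$ to both sides yields
\[
A(k,s'+1)\;=\;A(k-1,A(k,s'))\;\ge\;A(k-1,\alpha_{2k-2}(n))\;\ge\;n,
\]
so $s\le s'+1$. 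Combining the two inequalities gives $\alpha_{2k}(n)=1+\alpha_{2k}(\alpha_{2k-2}(n))$.

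The odd identity is proved by the same template, with $A$ replaced by $B$ throughout; the hypothesis $n\ge 3$ enters exactly where we need $B(k,0)=2<n$ to guarantee $s\ge 1$ at the analogous step, which in turn lets us unfold $B(k,s)=B(k-1,B(k,s-1))$. I do not anticipate a genuine obstacle here: the whole proof is a direct translation of definitions together with a single use of monotonicity. The only mild point of care is that the argument never has to expand $\alpha_{2k-2}$ or $\alpha_{2k-1}$ explicitly, so the base case $k=1$ (where these coincide with $\lceil n/2\rceil$ and $\lceil\sqrt{n}\rceil$) requires no separate treatment; the uniformity of the derivation across $k$ is what makes the proof short.
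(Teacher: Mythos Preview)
Your argument is correct. The paper does not supply its own proof of this lemma; it simply cites Lemma~12.1.16 in \cite{NS07}. Your derivation---unfolding the recurrence once, using the minimality in the definition of $\alpha_{2k}$ (respectively $\alpha_{2k+1}$), and invoking monotonicity of $A(k-1,\cdot)$ (respectively $B(k-1,\cdot)$)---is the standard way to prove this identity, and matches what one finds in the cited textbook.
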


Finally, another functional inverse of $A(\cdot, \cdot)$ is defined as in \cite{Tarjan75,Chaz87,CG06,NS07,Sol13}.
\begin{definition} [Inverse Ackermann function]\label{def:invAck}
$\alpha\left(m,n\right) = \min \left\{ i | i \ge 1, A \left( i, 4\ceil{m/n} \right) > \log_2{n} \right\}$.
\end{definition}

Finally, for all $n\ge 0$, we introduce the Ackermann function as $A(n) \coloneqq A(n,n)$, and its inverse as $\alpha(n) = \min\{s \ge 0 : A(s) \ge n\}$. In \cite{NS07}, it was shown that  $\alpha_{2\alpha(n)+2}(n)\le 4$. We observe that $\alpha(n)$ satisfies $\alpha(n) \le \log^*n$ for any $n \ge 2$.

\section{Warm-up: lower bounds for hop-diameters 2 and 3}

In this section, we prove the lower bound for cases $k=2$ (\Cref{lemma:lb2} in \Cref{sec:lb2}) and $k=3$ (\Cref{lemma:lb3} in \Cref{sec:lb3}). In fact, we prove more general statements (\Cref{thm:2,thm:3}), which apply not only to uniform line metric, but to subspaces of $t$-sparse line metrics, where a constant fraction of the points is missing.
We use these general statements in \Cref{sec:lbk}, to prove the result for general $k$ (cf.\ \Cref{thm:k}).

\subsection{Hop diameter 2}\label{sec:lb2}

\begin{theorem}\label{thm:2}
For any two positive integers $n\ge 1000$ and $t$, and any two integers $l,r$ such that $r = l + nt-1$, let $U((l,r),t)$ be a $t$-sparse line metric with $n$ points and let $X$ be a subspace of $U((l,r),t)$ which contains at least $\frac{31}{32}n$ points. Then, for any choice of $\eps \in [0,1/2]$, any spanner on $X$ with $((l,r),t)$-global hop-diameter 2 and stretch $1+\eps$ contains at least $T'_2(n) \ge \frac{n}{256}\cdot\alpha_{2}(n)$ $((l,r),t)$-global edges which have both endpoints inside $[l,r]$.
\end{theorem}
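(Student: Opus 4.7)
The plan is to prove \Cref{thm:2} by strong induction on $n$, establishing a recurrence $T'_2(n) \ge 2\,T'_2(n/2) + \Omega(n)$, which unrolls to $T'_2(n) = \Omega(n\log n) = \Omega(n\,\alpha_2(n))$; the constants will be tuned so that the solution matches the claimed $\tfrac{n}{256}\,\alpha_2(n)$. First, I would split $[l,r]$ into a left half $I_L := [l,\, l + (n/2)t - 1]$ and a right half $I_R := [l + (n/2)t,\, r]$, each itself a $t$-sparse line metric on $n/2$ representatives, and let $X_L := X \cap I_L$ and $X_R := X \cap I_R$. Within each half I would designate an ``attention set'' $Y_L \subset I_L$ (resp.\ $Y_R \subset I_R$) of $n/4$ representatives centred near the half's midpoint and lying inside the safe zone $\bigl[l + \lfloor nt/4\rfloor,\, l + \lceil 3nt/4\rceil\bigr]$, so that \Cref{cor:sep} applies. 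By the density hypothesis, $|Y_L \cap X|,\,|Y_R \cap X| \ge n/4 - n/32 = 7n/32$, and every $(1+\eps)$-spanner path between a point of $Y_L \cap X$ and a point of $Y_R \cap X$ is then confined to $[l,r]$ by \Cref{cor:sep}.

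Next, I would count \emph{cross edges} --- $((l,r),t)$-global edges with one endpoint in $I_L$ and the other in $I_R$ --- by a charging argument. For each pair $(a,b) \in (Y_L\cap X)\times(Y_R\cap X)$, the spanner provides a $2$-hop path, at least one of whose edges must be a cross edge since $a,b$ lie on opposite sides of the midpoint. Charge each pair canonically: the direct edge if the path is $1$-hop, the edge $(c,b)$ if the intermediate vertex $c$ is in $I_L$, and $(a,c)$ if $c \in I_R$. A fixed cross edge $e=(u,v)$ with $u \in I_L$, $v \in I_R$ receives at most $1 + |Y_L \cap X| + |Y_R \cap X| \le n/2 + 1$ charges --- one as a direct edge, at most $|Y_L \cap X|$ via pairs $(a,v)$ where $a \in Y_L\cap X$ is a spanner-neighbour of $u$, and symmetrically at most $|Y_R\cap X|$ via pairs $(u,b)$ where $b$ is a spanner-neighbour of $v$. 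Dividing the total charge $(7n/32)^2$ by this per-edge bound yields $\Omega(n)$ cross edges, comfortably above the $\tfrac{n}{256}$ target.

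To close the induction, I would invoke the inductive hypothesis on $U(I_L,t)$ with $X_L$ and on $U(I_R,t)$ with $X_R$, each yielding $\ge T'_2(n/2)$ global edges strictly inside the respective half. These two families are disjoint from each other (since $I_L \cap I_R = \emptyset$) and from the cross edges by construction, giving the promised recurrence $T'_2(n) \ge 2\,T'_2(n/2) + \Omega(n)$, which solves to the claimed lower bound.

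The principal technical care lies in threading two subtleties through the induction. First, the density in a single half can degrade: from $|X| \ge \tfrac{31}{32}n$ one only gets $|X_L|/(n/2) \ge \tfrac{15}{16}$, slightly below the $\tfrac{31}{32}$ threshold in the statement, so the inductive hypothesis is not literally applicable as stated. Resolving this will require either a mildly strengthened inductive statement tolerating a range of densities or a restoration step that renormalises the subspace, with the generous $\tfrac{1}{256}$ factor absorbing the slack across all recursion levels. Second, the original spanner restricted to $X_L$ is not \emph{a priori} a valid spanner of the kind the inductive hypothesis expects --- some $X_L$-pairs use intermediates inside $I_R$ which are not global in the $U(I_L,t)$ sense; applying the separation property within each half confines the paths of safe-zone pairs inside the half, so that the restricted sub-spanner does satisfy the required hop-diameter bound. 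Reconciling these two constraints with the cross-edge charging is where the proof's main technical work concentrates.
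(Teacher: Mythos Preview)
Your proposal has a genuine gap precisely where you flag it: the density-degradation problem. A direct induction on the $\tfrac{31}{32}$-dense statement cannot close, because after one split a half can have density only $\tfrac{15}{16}$, after two splits a quarter can have density only $\tfrac{7}{8}$, and after $\Theta(\log n)$ splits the density guarantee is gone entirely. Neither of your suggested fixes works: a ``strengthened inductive statement tolerating a range of densities'' would need to tolerate density loss that compounds over $\log n$ levels, and a per-level averaging argument (pick the denser half) yields only $T'_2(n) \ge T'_2(n/2) + \Omega(n)$, which solves to $O(n)$ rather than $\Omega(n\log n)$.

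The paper avoids this by \emph{decoupling} the recursion from the density issue. First (\Cref{lemma:lb2}) it proves the bound $T_2(n)\ge \tfrac{n}{16}\log n$ for the \emph{full} uniform line metric, where density never degrades; here the induction on halves is clean. Second (\Cref{lemma:2prim}), for a $\tfrac{31}{32}$-dense subspace $X$, it does not recurse on $X$ at all: it revisits the \emph{fixed} recursion tree from the first lemma, calls a node ``good'' if its interval retains $\ge\tfrac{15}{16}$-density in $X$, and observes by Markov that good intervals cover $\ge n/2$ points at every level. Each good interval still yields the same $\Omega(n_{i,j})$ cross edges, so summing over the tree gives $T'_2(n)\ge 0.48\cdot T_2(n)$. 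The $t$-sparse case is then a direct transfer. This recursion-tree post-processing is the idea your sketch is missing.

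Two smaller remarks. Your cross-edge count via double counting is correct but looser than necessary; the paper's argument is cleaner and gives a better constant: if fewer than $\min(|L|,|R|)$ cross edges exist, some $x\in L$ is incident to none, and then for every $r\in R$ the forced $2$-hop path $(x,l_r,r)$ produces a distinct cross edge $(l_r,r)$, already $|R|$ of them. And your second worry---that $H$ restricted to $X_L$ might not have the right global hop-diameter---is not actually an obstacle: the $t$-intervals of $I_L$ are exactly the first $n/2$ $t$-intervals of $[l,r]$, so an edge is $((l_L,r_L),t)$-global iff it is $((l,r),t)$-global, and the hop bound carries over directly.
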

\begin{remark}
Recall that we consider Steiner spanners, which could possibly contain additional Steiner points from the uniform line metric. 
\end{remark}
\begin{remark}
\Cref{thm:2} can be extended to $\eps > 1/2$. The only required change in the proof is to decrease the lengths of intervals by a factor of $1+\eps$, as provided by \Cref{lemma:sep}; it is readily verified that, as a result, the lower bound decreases by a factor of $\Theta(\eps)$.
\end{remark}

The theorem is proved in three steps. First, we prove \Cref{lemma:lb2}, which concerns uniform line metrics. Then, we prove \Cref{lemma:2prim} for a subspace that contains at least 31/32 fraction of the points of the original metric. In the third step, we observe that the same argument applies for $t$-sparse line metrics.

\begin{lemma}\label{lemma:lb2}
For any positive integer $n$, and any two integers $l,r$ such that $r = l + n-1$, let $U(l,r)$ be a uniform line metric with $n$ points. Then, for any choice of $\eps \in [0,1/2]$, any   spanner on $U(l,r)$ with hop-diameter 2 and stretch $1+\eps$ contains at least $T_2(n) \ge \frac{1}{16}\cdot n\log{n}$ edges which have both endpoints inside $[l,r]$.
\end{lemma}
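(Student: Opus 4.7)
I will avoid the naive attempt at induction via $T_2(n) \ge 2T_2(n/2) + \Omega(n)$, since restricting the given spanner to a half of $[l,r]$ need not yield a valid $2$-hop $(1+\eps)$-spanner on that half: once $\eps > 0$, a $(1+\eps)$-spanner path between two points of the same half can escape that half through a Steiner point, so the induction hypothesis cannot be directly applied. Instead, I will run a single global level-by-level counting argument on the laminar binary decomposition of $[l,r]$.

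Build the balanced binary tree $\mathcal T$ whose root is $[l,r]$ and whose children at each node are its left and right halves, descending until the nodes have size below a small constant. For $I \in \mathcal T$, call an edge $(u,v)$ of the given spanner $G$ a \emph{cross edge of $I$} if $u,v \in I$ lie in distinct children of $I$. Every edge of $G$ with both endpoints in $[l,r]$ is a cross edge of exactly one node of $\mathcal T$, namely the deepest node containing both endpoints; hence it suffices to lower-bound the number of cross edges per node and sum.

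The key per-node estimate is that, for every $I \in \mathcal T$ with $|I| \ge 34$ (so \Cref{cor:sep} applies), the number of cross edges of $I$ is at least $|I|/16$. By \Cref{cor:sep}, any $(1+\eps)$-spanner path between two points $a,b$ in the middle half of $I$ stays inside $I$; combined with the $2$-hop diameter, each such pair is witnessed either by a direct edge $(a,b) \in G$ or by a hub $c \in I$ with $(a,c),(c,b) \in G$. Focusing on the $(|I|/4)^2 = |I|^2/16$ pairs with $a$ in the left quarter of the middle half of $I$ and $b$ in the right quarter, the $2$-edge path is forced to cross the midpoint of $I$, contributing at least one cross edge of $I$: $(c,b)$ if $c$ lies in the left half of $I$, $(a,c)$ if $c$ lies in the right half, or the direct edge $(a,b)$ itself. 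A short case analysis shows that a single cross edge $(u,v)$ of $I$ serves at most $|I|/2 + 1$ such pairs (at most $1$ as a direct edge, and at most $|I|/4$ in each of the two hub-roles via $c = u$ or $c = v$), so the number of cross edges of $I$ is at least $(|I|^2/16)/(|I|/2 + 1) \ge |I|/16$. Summing over the $2^d$ nodes at each level $d$ yields at least $n/16$ cross edges per level, and summing over the $\Theta(\log n)$ useful levels produces $T_2(n) \ge (n\log n)/16$; small-$n$ and bottom-level corner cases are absorbed by the trivial bound $|E| \ge n-1$.

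\textbf{Main obstacle.} The hardest step is the per-cross-edge charging bound. In the stretch-$1$ case the hub $c$ is forced to lie between $a$ and $b$, making the ``$c$ in the left half of $I$ vs.\ $c$ in the right half'' case split immediate. For stretch $1+\eps$ the hub may lie slightly outside $[a,b]$, and the case split still applies only because the separation property, applied to the middle half of $I$, forces $c \in I$. A secondary subtlety---that a single edge of $G$ could a priori be counted at multiple levels---is resolved by noting that each edge is a cross edge of the unique deepest node of $\mathcal T$ containing both its endpoints.
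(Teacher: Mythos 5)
Your approach is genuinely different from the paper's, and its core is sound. The paper proves the lemma by induction on the two halves: the statement is deliberately phrased for Steiner spanners and counts only edges with both endpoints inside $[l,r]$, so the given spanner is itself a valid spanner for each half and the induction hypothesis applies verbatim --- your stated reason for abandoning induction (that the restriction to a half need not be a $2$-hop spanner of that half) is therefore not an actual obstruction to the paper's argument. The paper then gets its $\Omega(n)$ cross edges per split by a pigeonhole argument (if there were fewer than $|L|$ cross edges, some $x\in L$ would have none, and then each $r\in R$ forces a distinct cross edge $(l_r,r)$), giving $11n/64$ per level. Your alternative --- assigning each spanner edge to the unique deepest node of the laminar decomposition that separates its endpoints, and lower-bounding cross edges per node by charging midpoint-crossing pairs to hub edges, with at most $|I|/2+1$ pairs per edge --- is correct in its essentials and is a nice way to sidestep induction entirely.

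The gap is in the final accounting. Your per-node bound is only available when $|I|\ge 34$ (the hypothesis of \Cref{cor:sep}), so only the top roughly $\log_2 n-5$ levels contribute, and because you weakened the per-node count to exactly $|I|/16$, the level sum yields only $\tfrac{n}{16}(\log n-\Theta(1))$, which is short of the claimed $\tfrac{n}{16}\log n$ by $\Theta(n)$. The sentence ``absorbed by the trivial bound $|E|\ge n-1$'' does not close this: taking the maximum of the two bounds still falls short for large $n$, and simply adding $n-1$ is illegitimate because those edges may be the very cross edges already counted. Two easy repairs: (i) keep the un-weakened per-node bound $(|I|^2/16)/(|I|/2+1)\approx |I|/8$ and split into regimes --- for $n$ up to a fixed threshold use $n-1\ge \tfrac{n}{16}\log n$, and beyond it $\tfrac{n}{8}(\log n-O(1))\ge\tfrac{n}{16}\log n$; or (ii) note that for $\eps\le 1/2$ every unit edge $(a,a+1)$ is forced (any two-hop detour through another integer point has length at least $3>\tfrac{3}{2}$), and the unit edges whose deepest separating node lies below the useful levels are disjoint from all counted cross edges, so adding them covers the $\Theta(n)$ shortfall. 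Separately, your idealized counts ($(|I|/4)^2$ pairs, charge $|I|/2+1$) need the usual floor/ceiling slack, which may push the size threshold slightly above $34$; this is harmless but should be made explicit.
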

\begin{proof}
Suppose without loss of generality that we are working on the uniform line metric $U(1,n)$. Let $H$ be an arbitrary   $(1+\eps)$-spanner for $U(1,n)$ with hop-diameter 2.

For the base case, we take  $n < 128$. In that case our lower bound is 
$\frac{n}{16}\cdot \log{n} < n-1$, which is a trivial lower bound for the number of edges in $H$.

For the proof of the inductive step, we can assume that $n \ge 128$. We would like to prove that the number of spanner edges in $H$ is lower bounded by $T_2(n)$, which satisfies recurrence $T_2(n) = 2T_2(\floor{n/2})+ \Omega(n)$ with the base case $T_2(n) = (n/16)\log{n}$ when $n\le 128$. Split the interval into two disjoint parts: the left part $[1,\floor{n/2}]$ and the right part $[\floor{n/2}+1, n]$. From the induction hypothesis on the uniform line metric $U(1,\floor{n/2})$ we know that any   spanner with hop-diameter 2 and stretch $1+\eps$ contains at least $T_2(\floor{n/2})$ edges that have both endpoints inside $[1,\floor{n/2}]$. Similarly, any   spanner for $U(\floor{n/2}+1,n)$ contains at least $T_2(\floor{n/2})$ edges that have both endpoints inside $[\floor{n/2}+1,n]$. This means that the sets of edges considered on the left side and the right side are disjoint. We will show below that there are $\Omega(n)$ edges that have one point on the left and the other on the right.

Consider the set $L$, consisting of the points inside $[n/4, \floor{n/2}]$ and the set $R$, consisting of the points in $[\floor{n/2}+1, 3n/4]$. From \Cref{cor:sep}, since $n$ is sufficiently large, we know that any $(1+\eps)$-spanner path connecting point $a \in L$ and $b\in R$ has to have all its points inside $[1,n]$. 
We use term \emph{cross edge} to denote any edge that has one endpoint in the left part and the other endpoint in the right part. 
We claim that any spanner with hop-diameter at most $2$ and stretch $1+\eps$ has to contain at least $\min(|L|,|R|)$ cross edges. Without loss of generality, assume that $|L| \le |R|$. Suppose for contradiction that the spanner contains less than $|L|$ cross edges. This means that at least one point in $x\in L$ is not connected via a direct edge to any point on the right. Observe that, for every point $r \in R$, the 2-hop spanner path between $x$ and $r$ must be of the form $(x,l_r,r)$ for some point $l_r$ in the left set. It follows that every $r\in R$ induces a different cross edge $(l_r, r)$. Thus, the number of cross edges, denoted by $|E_C|$, is $|R| \geq |L|$, which is a contradiction. From the definition of $L$ and $R$, we know that $\min(|L|,|R|) \ge n/4-2$, implying that the number of cross edges is at least $n/4-2 \ge 11n/64$, for all $n \ge 26$. (See also \Cref{fig:lb2} for an illustration.) Thus, we have:

\begin{align*}
T_2(n) ~=~ 2T_2(\floor{n/2}) + \frac{11n}{64} 
~\stackrel{\mbox{\tiny{induction}}}{\ge}~ 2\cdot\frac{\floor{n/2}}{16}\log\floor{n/2} +  \frac{11n}{64}
~\ge~ \frac{n}{16}\cdot \log{n}~,\\
\end{align*}
as claimed. 
\end{proof}

\begin{figure}[!bht]
\centering
\input{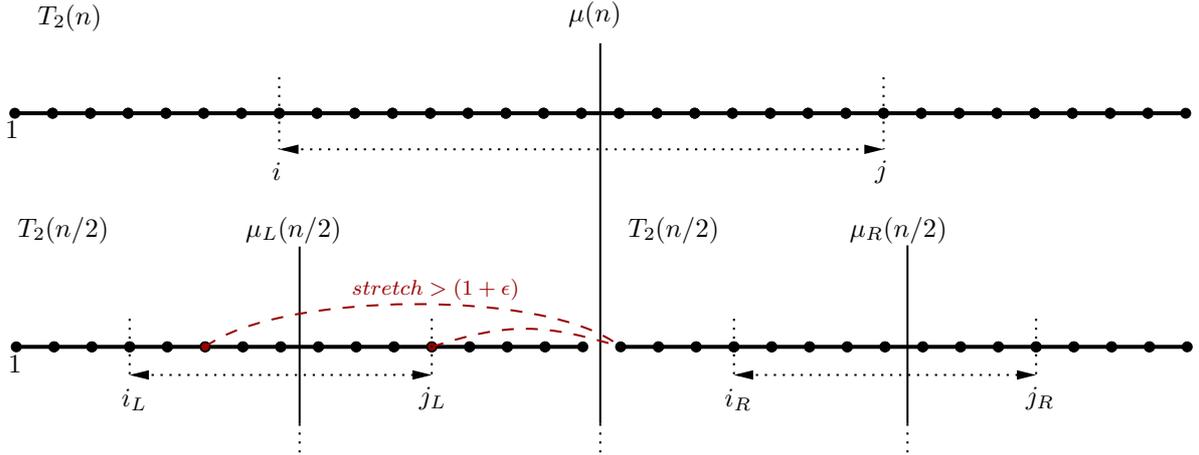}
\caption{An illustration of the first two levels of the recurrence for the lower bound for $k=2$ and $\eps=1/2$. We split the interval $U(1, n)$ into two disjoint parts.
In \Cref{lemma:lb2}, we show that there will be at least $\Omega(n)$ cross edges, which are the spanner edges having endpoints in both sets. In other words, they are crossing the middle line $\mu(n)$. The values $i_L$ and $j_L$ are set according to \Cref{cor:sep} so that the spanner edges crossing $\mu(n)$ cannot be used for the left set; otherwise the resulting stretch will be bigger than $1+\eps$. For example, the path formed by the dashed red edges have total length of $16$ which is more than $3/2$ (i.e. more than $1+\eps$) the distance between the two points they connect (which is $6$). 
Similarly, $i_R$ and $j_R$ are such that the edges considered for the right set do not cross $\mu(u)$. In conclusion, the edges for the left set, right set and the cross edges will be mutually disjoint.
Thus, the recurrence for the lower bound is $T_2(n)= 2T_2(\floor{n/2}) + \Omega(n)$, which resolves to $T_2(n) = \Omega(n\log{n})$. 
}
\label{fig:lb2}
\end{figure}
\begin{lemma}\label{lemma:2prim}
For any positive integer $n$, and any two integers $l,r$ such that $r = l + n-1$, let $U(l,r)$ be a uniform line metric with $n$ points and let $X$ be a subspace of $U(l,r)$ which contains at least $\frac{31}{32}n$ points. Then, for any choice of  $\eps \in [0,1/2]$, any   spanner on $X$ with hop-diameter 2 and stretch $1+\eps$ contains at least $T'_2(n) \ge 0.48 \cdot \frac{n}{16}\log{n}$ edges which have both endpoints inside $[l,r]$.
\end{lemma}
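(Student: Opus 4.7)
I would proceed by induction on $n$, closely mirroring the argument of \Cref{lemma:lb2} but carefully tracking the effect of the at most $n/32$ points of the underlying uniform line metric that are missing from $X$. For the base case (small $n$), the target $0.48(n/16)\log n$ is dominated by the trivial connectivity bound $|X| - 1 \ge 31n/32 - 1$, so the claim holds trivially.

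For the inductive step, split $[l,r]$ at its midpoint and identify the two ``middle'' sets $L := X \cap [l + \lfloor n/4 \rfloor, l + \lfloor n/2 \rfloor - 1]$ and $R := X \cap [l + \lfloor n/2 \rfloor, l + \lceil 3n/4 \rceil]$. Since at most $n/32$ points are missing from $X$, we have $|L|, |R| \ge n/4 - n/32 = 7n/32$. \Cref{cor:sep} guarantees that any $(1+\eps)$-spanner path between a point of $L$ and a point of $R$ lies strictly inside $[l,r]$, so the cross-edge counting argument of \Cref{lemma:lb2} applies verbatim: there are at least $\min(|L|,|R|) \ge 7n/32$ cross edges with both endpoints in $[l,r]$.

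For the recursive contribution, set $X_L := X \cap [l, l + \lfloor n/2 \rfloor - 1]$ and $X_R$ its counterpart; these are subspaces of uniform line metrics on $n/2$ points each, with $|X_L| + |X_R| \ge 31n/32$. I would then try to invoke the inductive hypothesis on each half, pick up at least $T'_2(\lfloor n/2 \rfloor)$ spanner edges per side, and combine with the $7n/32$ cross edges to close the recurrence $T'_2(n) \ge 2 T'_2(\lfloor n/2 \rfloor) + 7n/32$, which already yields $\Omega(n \log n)$ with a constant far exceeding $0.48/16$.

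The main obstacle is that the hypothesis $|X_L|/(n/2) \ge 31/32$ need not hold on both halves: in the worst case, all $n/32$ missing points lie in one half, leaving it with only a $15/16$ fraction of its points. To handle this I would strengthen the statement, replacing the fixed fraction $31/32$ by an arbitrary $\beta \in [1/2, 1]$ and the constant $0.48$ by a function $g(\beta)$ chosen to be concave on $[1/2, 1]$ with $g(31/32) \ge 0.48$. Because $\beta_L + \beta_R = 2\beta$, concavity of $g$ gives $g(\beta_L) + g(\beta_R) \le 2 g(\beta)$, so the weighted recurrence still closes, and the $7n/32$ cross-edge additive term absorbs the lower-order losses from the fraction drift across levels. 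The delicate step, and the one I expect to occupy most of the work, is calibrating $g$ so that the boundary condition $g(31/32) \ge 0.48$ holds simultaneously with concavity and with a well-defined base case once $\beta$ or $n$ becomes small.
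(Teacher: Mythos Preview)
Your proposed strengthening has the inequality backwards. To close a \emph{lower-bound} recurrence you need the inductive contributions to be large: with $T'_2(n,\beta)\ge g(\beta)\tfrac{n}{16}\log n$ as the hypothesis, the induction step reads
\[
g(\beta_L)\tfrac{n/2}{16}\log\tfrac{n}{2}+g(\beta_R)\tfrac{n/2}{16}\log\tfrac{n}{2}+\text{(cross)}\ \ge\ g(\beta)\tfrac{n}{16}\log n,
\]
which (up to the additive term) requires $g(\beta_L)+g(\beta_R)\ge 2g(\beta)$, i.e.\ \emph{convexity}. Concavity gives only $g(\beta_L)+g(\beta_R)\le 2g(\beta)$, which is the wrong direction and yields nothing. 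Even after correcting the sign, a second obstacle remains that you have not addressed: $\beta$ is not stable under halving. From $\beta=31/32$ one split can give $\beta_L=15/16$, the next $7/8$, then $3/4$, $1/2$, and $0$; so $\beta$ leaves $[1/2,1]$ within five levels, and meanwhile the cross-edge count $\min(|L|,|R|)$ itself deteriorates with $\beta$ rather than staying at $7n/32$. No calibration of $g$ on $[1/2,1]$ survives this.

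The paper avoids all of this by \emph{not} recursing on $X$. It fixes the recursion tree of $T_2$ from \Cref{lemma:lb2} (built on the full interval $[l,r]$, ignoring $X$) and argues globally level by level: call a level-$i$ interval \emph{good} if at least a $15/16$-fraction of its points lie in $X$. Since only $n/32$ points are missing in total, at each level at most $n/2$ points can sit in bad intervals (otherwise more than $n/2\cdot\tfrac{1}{16}=n/32$ points would be missing); together with at most $0.02n$ points lost to floors, at least $0.48n$ points lie in good intervals at every level. For a good interval the cross-edge estimate of \Cref{lemma:lb2} goes through unchanged, because the slack in $n_{i,j}/4-2\ge 11n_{i,j}/64$ absorbs the at most $n_{i,j}/16$ missing points. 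Summing over levels gives $T'_2(n)\ge 0.48\,T_2(n)$. The key point is that the $n/32$ missing points form a fixed global budget that cannot spoil more than $n/2$ points' worth of intervals at any one level of the fixed tree---precisely the phenomenon that wrecks a per-branch induction on $\beta$.
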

\begin{proof}
Recall the recurrence used in \Cref{lemma:lb2}, $T_2(n) =2 T_2(\floor{n/2}) + \frac{11n}{64}$, which provides a lower bound on the number of edges of any $(1+\eps)$-spanner with hop-diameter 2 for $U(l,r)$ for any $l$ and $r$ such that $r = l+n-1$. The base case for this recurrence occurs when the considered interval contains less than 128 points. Consider the recursion tree of $T_2(n)$ and denote its depth by $\ell$ and the number of nodes at depth $i$ by $c_i$. In addition, denote by $n_{i,j}$ the number of points in the $j$th interval of the $i$th recursion level and by $e_{i,j}$ the number of cross edges contributed by this interval. By definition, we have
\begin{align*}
T_2(n) 
&= \sum_{i=0}^{\ell}\sum_{j=1}^{c_i} e_{i,j} 
\ge \frac{n}{16}\cdot  \log{n}.
\end{align*}

Recursion tree of $T_2(n)$ contains one node at depth $i=0$ which corresponds to an interval of $n$ points, two nodes at depth $i=1$ corresponding to intervals of $\floor{n/2}$ points, etc. Letting $n_i$ denote the total size of intervals corresponding to nodes at depth $i$ of recursion tree, we get
\begin{align*}
n_i = 2^i\cdot\floor*{\frac{n}{2^i}} \ge n-2^i + 1.
\end{align*}
The base case of the recurrence occurs whenever the considered interval contains less than 128 points. In other words, leaves of the recursion tree have $n_{i,j} < 128$.
Since the size of all the intervals at depth $i$ is the same  and it equals to $\floor{n/2^{i}}$, it follows that the leaves of the tree are at some level $i$ which satisfies $64 \le \floor{n/2^{i}} < 127$. At that level, $n_i \ge 0.98\cdot n$. For some level $i$ of the recursion tree, we refer to all the $n-n_i$ points which are not contained in any of the intervals of that level as \emph{ignored points}. Denote the collection of intervals containing ignored points by \emph{ignored intervals}.

Let $H'$ be any $(1+\eps)$-spanner on $X$ with hop-diameter 2. To lower bound the number of spanner edges in $H'$, we now consider the same recursion tree but take into consideration the fact that we are working on metric $X$, which is a subspace of $U(l,r)$. Hence, at each level of recursion, instead of $n$ points, there are at least $31n/32$ points in all the intervals of that level. We call the $j$th interval in the $i$th level \emph{good} if it contains at least $15n_{i,j}/16$ points from $X$. (Recall that we have used $n_{i,j}$ to denote the number of points from $U(l,r)$ in the $j$th interval of the $i$th level.) From the definition of good interval and the fact the intervals of each recursion level contain together at least $31n/32$ points, it follows that there are at least $n/2$ points contained in the good intervals at the $i$th level. If more than $n/2$ points were contained in bad intervals, and any such interval had at least $1/16$-fraction of points missing, it means that we would have more than $n/32$ points missing, which contradicts that we have at least $31n/32$ points in $X$. Recalling that we have at most $0.02n$ ignored points, we conclude that there is at least $0.48n$ points contained in the good intervals which were not ignored.

In \Cref{lemma:lb2}, we worked on the uniform line metric $U(1,n)$ and considered two intervals $L$ and $R$. We lower bounded the number of cross edges in this interval by $\min(|L|,|R|)\ge n/4-2 \ge 11n/64$, which holds for all $n \ge 26$. Observe that if the interval $[1,n]$ contained at least $15n/16$ points, rather than $n$, the same bound on the number of cross edges would hold, since $\min(|L|,|R|)-n/16 \ge n/4-2-n/16\ge 11n/64$, for all $n \ge 128$. This means that every good interval of size $n_{i,j}$ contributes at least $11n_{i,j}/64$ cross edges, as long as it is not considered as the base case (i.e. as long as $n_{i,j} \ge 128$). The same reasoning is applied inductively, so it holds also for $n_{i,j}$, rather than $n$, for any $i,j$.

Next, we analyze the recurrence $T'_2(n)$ representing the contribution of the intervals to the number of spanner edges for $X$. A lower bound on $T'_2(n)$ will provide a lower bound on the number of spanner edges of any spanner $H'$ of $X$ as defined above. Denote by $\Gamma_i$ all the good intervals in the $i$th level which were not ignored and by $e'_{i,j}$ the number of cross edges contributed by the $j$th interval in the $i$th level. 
Then we have
\begin{align*} 
T'_2(n) 
&= \sum_{i=0}^{\ell}\sum_{j=1}^{c_i} e'_{i,j} \ge \sum_{i=0}^{\ell}\sum_{j\in \Gamma_i} e_{i,j} \ge 0.48\cdot T_2(n) \ge 0.48 \cdot \frac{n}{16}\log{n}
\end{align*}
as claimed. 
This completes the proof of \Cref{lemma:2prim}.
\end{proof}
\begin{proof}[Completing the proof of \Cref{thm:2}] Note that $\alpha_2(n)=\ceil{\log{n}}$ and hence, we will show that $T'_2(n)\geq \frac{n}{256}\ceil{\log n}$. Suppose without loss of generality that we are working on any $t$-sparse line metric with $n$ points, $U((1,N),t)$, where $N = nt$. Let $H$ be an arbitrary  $(1+\eps)$-spanner for $U((1,N),t)$ with $((1,N),t)$-global hop-diameter 2. We would like to lower bound the number of $((l,r),t)$-global edges required for $H$. Let $M=\floor{n/2}t$ and let $L$ be the set of $((l,r),t)$-intervals that are fully inside $[N/4,M]$ and $R$ be the set of $((l,r),t)$-intervals that are fully inside $[M,3N/4]$. In that case, the number of $((l,r),t)$-intervals inside $L$ can be lower bounded by $|L|\ge\floor{(M-N/4+1)/t}\ge n/4-2$, which is the bound that we used for $L$. Similarly, we obtain that $|R|\ge n/4-1$. The cross edges will be those edges that contain one endpoint in $[1,M]$ and the other endpoint in $[M+1,N]$. It follows that the cross edges are also $((l,r),t)$-global edges. The same argument can be applied to lower bound the number of cross edges, implying the lower bound on the number of $((l,r),t)$-global edges. The same proof in \Cref{lemma:2prim} gives:
	\begin{equation*}
		T'_2(n)\geq 0.48\cdot \frac{n}{16}\log{n}\geq \frac{n}{256}\ceil{\log n}, 
	\end{equation*}
when $n\geq 1000$, as desired. 
\end{proof}

\subsection{Hop diameter 3}\label{sec:lb3}
\begin{theorem}\label{thm:3}
For any two positive integers $n\ge 1000$ and $t$, and any two integers $l,r$ such that $r = l + nt-1$, let $U((l,r),t)$ be a $t$-sparse line metric with $n$ points and let $X$ be a subspace of $U((l,r),t)$ which contains at least $\frac{127}{128}n$ points. Then, for any choice of $\eps \in [0,1/2]$, any spanner on $X$ with $((l,r),t)$-global hop-diameter 3 and stretch $1+\eps$ contains at least $T'_3(n) \ge \frac{n}{1024}\cdot\alpha_{3}(n)$ $((l,r),t)$-global edges which have both endpoints inside $[l,r]$.
\end{theorem}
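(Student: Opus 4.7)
The plan is to adapt the three-step structure of Theorem \ref{thm:2} to hop-diameter $3$, replacing the recurrence $T_2(n)=2T_2(n/2)+\Omega(n)$ by $T_3(n)\ge \lceil\sqrt n\,\rceil \cdot T_3(\lceil\sqrt n\,\rceil)+\Omega(n)$. This solves to $T_3(n)=\Omega(n\log\log n)$: writing $f(n):=T_3(n)/n$ gives $f(n)\ge f(\sqrt n)+\Omega(1)$, which unfolds $\Theta(\log\log n)$ times before reaching the base case. Step~1 proves the uniform line-metric version (an analog of Lemma \ref{lemma:lb2}): partition $[l,r]$ into $\lceil\sqrt n\,\rceil$ consecutive blocks of size $\sqrt n$, restrict attention to the middle region $[l+n/4,\,l+3n/4]$ so that \Cref{cor:sep} applies and spanner paths between such points stay inside $[l,r]$, and apply induction inside each block to account for the $\lceil\sqrt n\,\rceil \cdot T_3(\sqrt n)$ term.

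The crux of Step~1 is showing $\Omega(n)$ \emph{cross edges}, i.e.\ spanner edges between different blocks having both endpoints in the middle region. Unlike the $k=2$ case, where a 2-hop path forces one fresh cross edge per right-side point, a 3-hop path $a\to x_1\to x_2\to b$ can route through any of three edges that cross block boundaries, and different pairs $(a,b)$ may share the same cross edge. I would argue by dichotomy: either (a) for each block $B$ in the middle region, $B$ has $\Omega(\sqrt n)$ incident cross edges, summing to $\Omega(n)$; or (b) some block $B$ has few incident cross edges, so some point $a\in B$ can reach only $O(\sqrt n)$ points of other blocks through its first cross hop, forcing $\Omega(n)$ of the pairs $(a,b)$ to use 3-hop paths whose \emph{middle} edge $(x_1,x_2)$ is itself a cross edge. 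To avoid a single middle edge being shared by too many pairs, I would further restrict to an inner strip (say $[l+3n/8,\,l+5n/8]$) and invoke \Cref{lemma:sep} with tighter endpoints to pin $x_1,x_2$ close to $a,b$, combined with the near-monotone behavior of $(1+\epsilon)$-spanner paths on the line, to bound the reuse of each middle cross edge by a constant.

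Step~2 extends the bound to subspaces $X\subseteq U(l,r)$ with $|X|\ge (127/128)n$, in parallel with Lemma \ref{lemma:2prim}. Call an interval at a given recursion level \emph{good} if it retains at least a $(63/64)$-fraction of its points from $X$; if more than $n/2$ points lay in bad intervals, each losing $\ge 1/64$ of its points, the total deficit would exceed $n/128$, contradicting $|X|\ge (127/128)n$. Hence the good intervals at every level cover $\ge n/2$ points, and the cross-edge bound from Step~1 remains $\Omega(n_{i,j})$ per good interval because a $(1/64)$-fraction loss affects only the constants. The tighter fraction $127/128$ (versus $31/32$ for $k=2$) is exactly the slack needed so that the Step~1 argument, which tolerates at most a $(1/64)$-fraction of missing points per interval, survives inside every good interval. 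Summing over the recursion tree, a constant-factor loss analogous to the $0.48$ in Lemma \ref{lemma:2prim} yields $T'_3(n)\ge (n/1024)\lceil\log\log n\rceil$ for $n\ge 1000$. Step~3 lifts from the uniform case to the $t$-sparse setting exactly as in the last paragraph of the proof of Theorem \ref{thm:2}: the $((l,r),t)$-intervals play the role of individual points, counts of cross edges become counts of $((l,r),t)$-global edges, and the separation property is unchanged.

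The principal obstacle I anticipate is the $\Omega(n)$ cross-edge bound in Step~1. For $k=3$ with stretch $1+\epsilon$, a single cross edge can appear as the middle edge of many 3-hop spanner paths, so a per-pair charging argument can lose a factor of $\Theta(\sqrt n)$. The hard part is structuring the charging so that enough pairs are forced into edge-disjoint (or nearly disjoint) 3-hop paths, which will depend on jointly exploiting the separation property, the monotone-like ordering of $(1+\epsilon)$-spanner paths on the line, and a careful choice of an inner sub-strip within the middle region.
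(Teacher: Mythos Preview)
Your three-step scaffold, the $\sqrt{n}$-block recurrence, and Steps~2 and~3 are essentially the paper's argument. The divergence---and the gap---is in the cross-edge count in Step~1.

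Your dichotomy is on \emph{blocks} (each block has $\Omega(\sqrt n)$ incident cross edges, or some block has few), and in case~(b) you try to bound the reuse of a ``middle'' cross edge by a constant. That last step is where the argument breaks: a single edge $(x_1,x_2)$ can legitimately serve as the middle edge for $\Theta(\sqrt n)\times\Theta(\sqrt n)$ pairs $(a,b)$ with $a$ in $x_1$'s block and $b$ in $x_2$'s block, and neither the separation lemma nor near-monotonicity of $(1+\eps)$-paths on the line prevents this. Moreover, in case~(b) your own setup forces all the relevant paths to exit the bad block $B$ through one of its few cross edges, so the middle edges you are counting are drawn from that same small set---you cannot extract $\Omega(n)$ distinct ones this way.

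The paper's dichotomy is on \emph{points}, not blocks, and this is what makes the counting clean. Call a point \emph{global} if it is incident to at least one cross edge. If at least a quarter of the $\Theta(n)$ middle-region points are global, each contributes a cross edge and double-counting gives $\Omega(n)$ cross edges directly (Claim~\ref{claim:g3}). Otherwise a $3/4$-fraction are non-global. Now take any two non-global points $a\in A$ and $b\in B$ in distinct blocks: since $a$ has no incident cross edge, its first hop must stay inside $A$ (the separation property rules out leaving $[l,r]$); symmetrically $b$'s last hop stays inside $B$. Hence every $3$-hop spanner path is forced into the shape $(a,a',b',b)$ with $a'\in A$, $b'\in B$, and the middle edge $(a',b')$ is a cross edge \emph{between the specific block pair $(A,B)$}. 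You then count at the level of block pairs, not point pairs: there are at least $\binom{m/\sqrt n}{2}=\Omega(n)$ pairs of blocks containing non-global points, each requiring its own cross edge (Claim~\ref{claim:ng3}). The reuse problem simply does not arise, because the charging is one cross edge per block pair, not per point pair. This is the idea your proposal is missing.
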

\begin{remark}
Recall that we consider Steiner spanners, which could possibly contain additional Steiner points from the uniform line metric. 
\end{remark}
\begin{remark}
\Cref{thm:3} can be extended to $\eps > 1/2$. The only required change in the proof is to decrease the lengths of intervals by a factor of $1+\eps$, as provided by \Cref{lemma:sep}; it is readily verified that, as a result, the lower bound decreases by a factor of $\Theta(\eps)$.
\end{remark}

The theorem is proved in three steps. First, we prove \Cref{lemma:lb3}, which concerns uniform line metrics. Then, we prove \Cref{lemma:3prim} for a subspace that contains at least 31/32 fraction of the points of the original metric. In the third step, we observe that the same argument applies for $t$-sparse line metrics.
\begin{lemma}\label{lemma:lb3}
For any positive integer $n$, and any two integers $l,r$ such that $r = l + n-1$, let $U(l,r)$ be a uniform line metric with $n$ points. Then, for any choice of  $\eps \in [0,1/2]$, any spanner on $U(l,r)$ with hop-diameter 3 and stretch $1+\eps$ contains at least $T_3(n) \ge \frac{n}{40} \log\log{n}$ edges which have both endpoints inside $[l,r]$.
\end{lemma}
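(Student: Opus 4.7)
The plan is to establish the recurrence
\[
T_3(n) \;\geq\; \floor{\sqrt{n}} \cdot T_3\!\left(\floor{\sqrt{n}}\right) + \frac{n}{40}
\]
by strong induction on $n$, which unwinds (via $f(n) := T_3(n)/n$ satisfying $f(n) \geq f(\sqrt{n}) + 1/40$ and $\log\log \sqrt{n} = \log\log n - 1$) to the target bound $T_3(n) \geq (n/40)\log\log n$. The base case covers every $n$ small enough that $(n/40)\log\log n \leq n-1$ (true whenever $\log\log n \leq 40$), where the trivial connectedness bound $|E(H)| \geq n-1$ already suffices.

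For the inductive step, I would partition $[l,r]$ into $\floor{\sqrt{n}}$ consecutive blocks of size $\approx\sqrt{n}$. \textbf{Within-block edges:} For each block $B$, applying Corollary~\ref{cor:sep} with $B$ playing the role of $[1,N]$ shows that any $3$-hop $(1+\epsilon)$-spanner path between two points of $B$'s central sub-interval $[|B|/4, 3|B|/4]$ stays inside $B$. Hence the edges of $H$ contained in $B$ form a $(1+\epsilon)$-spanner with hop-diameter $3$ for (the central part of) the uniform line metric on $B$, so by the induction hypothesis at least $T_3(\floor{\sqrt{n}})$ edges lie inside each block; the within-block edge sets are disjoint across blocks, contributing the first term of the recurrence.

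\textbf{Cross edges:} The main step is to show that at least $n/40$ edges of $H$ have endpoints in two different blocks. Following the template of Lemma~\ref{lemma:lb2}, I restrict attention to the central points in $[l+n/4, l+3n/4]$, so that Corollary~\ref{cor:sep} guarantees every $3$-hop spanner path among them stays inside $[l,r]$. The $k=2$ dichotomy (``$p$ has a direct cross edge, or each far $q$ forces its own cross edge'') does not transfer directly: a 3-hop path $(p,x,y,q)$ may route through a cross edge in the \emph{middle}, which a priori could serve many pairs simultaneously. I would proceed by a two-level dichotomy---if $H$ already contains $\geq n/40$ direct cross edges we are done; otherwise a constant fraction of central points $p$ have no direct cross edge, and for each such $p$ and each target $q$ in a different block the 3-hop path must have the form $(p,x,y,q)$ with $x$ in $p$'s own block (because $(p,x)$ is not a cross edge). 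Then either $(y,q)$ is a cross edge, in which case distinct $q$'s yield distinct such edges, or $(x,y)$ is a middle cross edge; since each middle cross edge $(x,y)$ with $y$ outside $p$'s block can cover at most $\deg(y)$ values of $q$ via the last hop, a degree-counting / pigeonhole argument forces a total of $\Omega(n)$ cross edges.

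The main obstacle is precisely this last charging step, which is genuinely new for $k=3$: a middle cross edge can be shared among many $(p,q)$ pairs, so one must bound the amount of ``reuse.'' I expect the cleanest resolution to split the analysis by which of the three edges of the $3$-hop path is the cross edge, handling the endpoint-cross-edge case by direct bijection with target points $q$ and the middle-cross-edge case by a pigeonhole on degrees (or, alternatively, by invoking the $k=2$ bound of Lemma~\ref{lemma:lb2} on the bipartite subproblem induced by the block-boundary vertices). Once $n/40$ cross edges are secured, the recurrence unrolls to $T_3(n) \geq (n/40)\log\log n$, as claimed.
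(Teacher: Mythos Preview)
Your high-level recurrence and block partition match the paper, but two places need fixing.

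\textbf{Within-block edges.} You are working too hard. There is no need to invoke Corollary~\ref{cor:sep} on each block $B$ or to talk about ``the central part of $B$''. The spanner $H$ is already a (Steiner) $(1+\eps)$-spanner with hop-diameter $3$ for the points of $B$ --- it is a spanner for \emph{all} pairs in $U(l,r)$, in particular for pairs inside $B$. The induction hypothesis, applied to $B$ in place of $[l,r]$, directly says that $H$ has at least $T_3(|B|)$ edges with both endpoints inside $B$. Your version (edges inside $B$ span only the central half of $B$) would give $T_3(|B|/2)$ instead, losing a factor at every level.

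\textbf{Cross edges.} This is where the real gap is, and you correctly flag it. Your per-pair charging (``a middle cross edge $(x,y)$ can cover at most $\deg(y)$ values of $q$'') does not close: degrees are not bounded, and invoking Lemma~\ref{lemma:lb2} on a ``bipartite subproblem'' would give an $n\log n$-type bound, not the $\Omega(n)$ additive term you need. The paper's resolution avoids charging altogether by changing the unit of accounting from \emph{pairs of points} to \emph{pairs of blocks}. Call a central point \emph{global} if it is incident on some cross edge, \emph{non-global} otherwise. If a constant fraction of the $\Theta(n)$ central points are global, they already witness $\Omega(n)$ cross edges. Otherwise a constant fraction are non-global; the key observation is that if $a$ and $b$ are \emph{both} non-global and lie in distinct blocks $A$ and $B$, then the first edge out of $a$ stays in $A$ and the last edge into $b$ stays in $B$ (neither endpoint has any cross edge, and by Corollary~\ref{cor:sep} the path cannot leave $[l,r]$), so the $3$-hop path is forced to be $(a,a',b',b)$ with $a'\in A$, $b'\in B$, and the middle edge $(a',b')$ is a cross edge \emph{between $A$ and $B$}. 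Distinct unordered pairs of blocks give distinct cross edges, so the number of cross edges is at least $\binom{m/\sqrt{n}}{2}=\Omega(n)$, where $m=\Theta(n)$ is the number of non-global central points (hence $m/\sqrt{n}=\Theta(\sqrt{n})$ blocks contain one). No reuse analysis is needed.
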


\begin{proof}
Suppose without loss of generality that we are working on the uniform line metric $U(1,n)$. Let $H$ be an arbitrary $(1+\eps)$-spanner for $U(1,n)$ with hop-diameter 3.

For the base case, we assume that $n < 128$. We have that $ \frac{n}{40} \log\log{n} < n-1$, which is a trivial lower bound on the number of edges of $H$.

We now assume that $n \ge 128$. Divide the the interval $[1,n]$ into consecutive subintervals containing $b\coloneqq\floor{\sqrt{n}}$ points: $[1, b], [b+1, 2b]$, etc. 
Our goal is to show that the number of spanner edges is lower bounded by $T_3(n)$, which satisfies recurrence
\begin{align*}
T_3(n) = \floor*{\frac{n}{\floor*{\sqrt{n}}}} \cdot T_3\left(\floor*{\sqrt{n}}\right) +  \Omega(n),
\end{align*}
with the base case $T_3(n) = (n/40)\log\log{n}$ when $n< 128$.

For any $j$ such that $1 \le j \le \floor{n/b}$, the interval spanned by the $j$th subinterval is $[(j-1)b+1, jb]$. Using the induction hypothesis, any spanner on $U((j-1)b+1, jb)$ contains at least $T_3(b)$ edges that are inside $[(j-1)b+1, jb]$. This means that all the subintervals will contribute at least $\floor{n/b} \cdot T_3(b)$ spanner edges that are mutually disjoint and in addition do not go outside of $[1,n]$. We will show that there are  $\Omega(n)$ edges that have endpoints in two different subintervals, called \emph{cross edges}. By definition, the set of cross edges is disjoint from the set of spanner edges considered in the term $\floor{n/b}\cdot T_3(b)$.

Consider the points that are within interval $[n/4, 3n/4]$. From \Cref{cor:sep}, since $n$ is sufficiently large, we know that any $(1+\eps)$-spanner path connecting two points in $[n/4,3n/4]$ has to have all its points inside $[1,n]$. 

We call a point \emph{global} if it is adjacent to at least one cross edge. Otherwise, the point is \emph{non-global}. The following two claims bound the number of cross edges induced by global and non-global points, respectively.
\begin{claim}\label{claim:g3}
Suppose that among points inside interval $[n/4,3n/4]$,
$m$ of them are global. Then, they induce at least $m/2$ spanner edges.
\end{claim}
\begin{proof}
Each global point contributes at least one cross edge and each edge is counted at most twice.
\end{proof}

\begin{claim}\label{claim:ng3}
Suppose that among points inside interval $[n/4,3n/4]$, $m$ of them are non-global. Then, they induce at least $\binom{m/\sqrt{n}}{2}$ cross edges. 
\end{claim}
\begin{proof}
Consider two sets $A$ and $B$ such that $A$ contains a non-global point $a \in [n/4,3n/4]$ and $B$ contains a non-global point $b \in [n/4,3n/4]$. Since $a$ is non-global, it can be connected via an edge either to a point inside of $A$ or to a point outside of $[1,n]$. Similarly, $b$ can be connected to either a point inside of $B$ or to a point outside of $[1,n]$. From \Cref{cor:sep}, and since $a,b \in [n/4,3n/4]$, we know that every spanner path with stretch $(1+\eps)$ connecting $a$ and $b$ has to use points inside $[1,n]$. This means that the spanner path with stretch $(1+\eps)$ has to have a form $(a,a',b',b)$, where $a' \in A$ and $b' \in B$.
In other words, we have to connect points $a'$ and $b'$ using a cross edge; furthermore every pair of intervals containing at least one non-global point induce one such edge and for every pair this edge is different. 

Each interval contains at most $b = \floor{\sqrt{n}}$ non-global points, so the number of sets containing at least one non-global point is at least $m/b$.
Interconnecting all the sets requires $\binom{m/b}{2} \ge \binom{m/\sqrt{n}}{2}$ edges.
\end{proof}

The number of points inside $[n/4,3n/4]$ is at least $n/2+1$, but we shall use a slightly weaker lower bound of $15n/32$.
We consider two complementary cases. In the first case, at least $1/4$ of $15n/32$ points are global. \Cref{claim:g3} implies that the number of the cross edges induced by these points is at least $15n/256$.
The other case is that at least $3/4$ fraction of $15n/32$ points are non-global. \Cref{claim:ng3} implies that for a sufficiently large $n$, the number of cross edges induced by these points can be lower bounded by $15n/256$ as well.
In other words, we have shown that in both cases, the number of cross edges is at least $\frac{15}{256}n > \frac{n}{18}$. Thus, we have:
\begin{align*}
T_3(n) &\geq \floor*{\frac{n}{\floor*{\sqrt{n}}}}  \cdot T_3\left(\floor{\sqrt{n}}\right) + \frac{n}{18} ~\stackrel{\mbox{\tiny{induction}}}{\ge}~ \floor{\sqrt{n}} \cdot \frac{\floor{\sqrt{n}}}{40}(\log\log\floor{\sqrt{n}}) + \frac{n}{18}
~\ge~ \frac{n}{40}\log\log{n}~,
\end{align*}
as claimed.
\end{proof}

\begin{lemma}\label{lemma:3prim}
For any positive integer $n$, and any two integers $l,r$ such that $r = l + n-1$, let $U(l,r)$ be a uniform line metric with $n$ points and let $X$ be a subspace of $U(l,r)$ which contains at least $\frac{127}{128}n$ points. Then, for any choice of $\eps \in [0,1/2]$, any spanner on $X$ with hop-diameter 3 and stretch $1+\eps$ contains at least $T'_3(n) \ge 0.18\cdot\frac{n}{40}\log\log{n}$ edges which have both endpoints inside $[l,r]$.
\end{lemma}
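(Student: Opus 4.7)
The plan is to mirror the proof of Lemma 3.3, which extended the $k=2$ uniform-line bound to subspaces. Let $H'$ be any $(1+\eps)$-spanner on $X$ with hop-diameter 3. Expand the recursion $T_3(n) = \lfloor n/\lfloor\sqrt n\rfloor\rfloor \cdot T_3(\lfloor\sqrt n\rfloor) + \frac{n}{18}$ into its recursion tree: the root corresponds to the entire interval of $n$ points; a node at depth $i$ with $n_{i,j}$ points has $\lfloor n_{i,j}/\lfloor\sqrt{n_{i,j}}\rfloor\rfloor$ children, each corresponding to a subinterval of size $\lfloor\sqrt{n_{i,j}}\rfloor$; and leaves occur when $n_{i,j} < 128$. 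Call a point \emph{ignored} at level $i$ if it does not lie in any interval at that level (this happens only because of integer rounding when splitting). The loss from one level to the next is bounded by the number of parent intervals times $\sqrt{\,}$ of their size; because interval sizes shrink doubly exponentially, this forms a geometric series dominated by its last term, so a $\ge 0.98$ fraction of $n$ remains non-ignored at every level.

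Next, call an interval $I$ at depth $i$ with $n_{i,j}$ parent-points \emph{good} if at least $15/16$ of its $n_{i,j}$ points survive in $X$. For a good interval $I$, the argument of Lemma 3.4 still produces $\Omega(n_{i,j})$ cross edges with respect to $H'$: the central region $[n_{i,j}/4, 3n_{i,j}/4]$ contains at least $\frac{n_{i,j}}{2} - \frac{n_{i,j}}{16} = \frac{7n_{i,j}}{16}$ points of $X$, and applying the global/non-global dichotomy of Claims 3.1.1 and 3.1.2 to these points still yields a cross-edge contribution proportional to $n_{i,j}$---linearly in the global case, and via the quadratic binomial $\binom{m/\sqrt{n_{i,j}}}{2}$ in the non-global case (both are robust to losing a $1/16$ fraction of points, with plenty of slack).

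To count the $X$-mass in good, non-ignored intervals at each level, reuse the averaging argument from Lemma 3.3: since at most $n/128$ points are missing from $X$ overall, and any non-good interval is missing at least $n_{i,j}/16$ of its points, the total parent-size of non-good intervals at any one level is at most $16\cdot n/128 = n/8$. Combined with the $\le 0.02 n$ bound on ignored points, at least $(1 - 1/8 - 0.02)n$ points lie in good, non-ignored intervals at every level of the recursion tree---comfortably a constant fraction of $n$.

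Finally, sum the contributions. Writing $e'_{i,j}$ for the cross edges of $H'$ in interval $(i,j)$, and letting $\Gamma_i$ denote its good, non-ignored intervals at level $i$, we obtain
\begin{equation*}
T'_3(n) \;=\; \sum_{i,j} e'_{i,j} \;\ge\; c \sum_{i=0}^{\ell} \sum_{j \in \Gamma_i} n_{i,j} \;\ge\; c \cdot c' \cdot T_3(n),
\end{equation*}
where $c$ is the constant implicit in the cross-edge bound for good intervals (an inspection of the two cases above gives $c \ge 1/19$), and $c'$ is the per-level fraction of good non-ignored mass computed above. The product $cc'$ exceeds $0.18$ with ample margin, so invoking Lemma 3.4 gives $T'_3(n) \ge 0.18\cdot (n/40)\log\log n$ as claimed. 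The only technical care needed is in the constants---no new conceptual obstacle arises beyond those already handled in Lemma 3.3, and the extension to $t$-sparse line metrics is immediate by the same observation as in the $k=2$ case (good intervals and cross edges translate into $((l,r),t)$-intervals and $((l,r),t)$-global edges).
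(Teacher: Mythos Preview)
Your approach is the same as the paper's---unroll the $T_3$ recursion tree, identify ``good'' intervals, bound the mass in good non-ignored intervals per level, and sum. Two quantitative points need fixing, though, and together they mean the proof as written does not close.

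First, the ``$\ge 0.98$ non-ignored'' bound is carried over from the $k=2$ case and is wrong for the $\sqrt{\cdot}$-recursion. The loss in passing from a level with interval size $s$ to the next is roughly $n/\sqrt{s}$; at the last split, where parent intervals have size just above $128$, this alone is about $n/\sqrt{128}\approx 0.09n$, not $0.02n$. The paper only claims $n_i \ge 0.68n$ and compensates by taking the stricter good-interval threshold $63/64$ (rather than your $15/16$): then bad intervals cover at most $64\cdot n/128 = n/2$ points, leaving $\ge 0.68n - n/2 = 0.18n$ good non-ignored mass per level---whence the constant $0.18$ in the statement.

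Second, your displayed chain $T'_3 \ge c \sum_{i}\sum_{j\in\Gamma_i} n_{i,j} \ge c\,c'\,T_3$ with $c \ge 1/19$ and $c' \approx 0.855$ yields $cc' \approx 0.045$, not $\ge 0.18$ as you assert. The middle step is loose by a factor of about $18$: since $e_{i,j} = n_{i,j}/18$, one has $\sum_{i}\sum_{j\in\Gamma_i} n_{i,j} \approx 18\,c'\,T_3(n)$, so what actually follows is $T'_3 \gtrsim 18\,c\,c'\,T_3$, which \emph{is} comfortably above $0.18\,T_3$. The paper avoids this bookkeeping by choosing the threshold $63/64$ so that good intervals obey $e'_{i,j} \ge e_{i,j}$ outright, after which $T'_3 \ge \sum_i\sum_{j\in\Gamma_i} e_{i,j} \ge 0.18\,T_3$ is immediate.
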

\begin{proof}
Recall the recurrence used in \Cref{lemma:lb3}, $T_3(n) = \floor{n/\floor{\sqrt{n}}}\cdot T_3(\floor{\sqrt{n}}) + \frac{n}{18}$, which provides a lower bound on the number of edges of any $(1+\eps)$-spanner with hop-diameter 3 for $U(l,r)$ for any $l$ and $r$ such that $r = l+n-1$. The base case for this recurrence occurs whenever the considered interval contains less than 128 points. Consider the recursion tree of $T_3(n)$ and denote its depth by $\ell$ and the number of nodes at depth $i$ by $c_i$. In addition, denote by $n_{i,j}$ the number of points in the $j$th interval of the $i$th recursion level and by $e_{i,j}$ the number of cross edges contributed by this interval. By definition, we have

\begin{align*}
T_3(n) 
&= \sum_{i=0}^{\ell}\sum_{j=1}^{c_i} e_{i,j}
~\ge~ \frac{n}{40}\cdot \log\log{n}
\end{align*}

Recursion tree of $T_3(n)$ contains one node at depth $i=0$ which corresponds to an interval of $n$ points, $\floor{n/\floor{\sqrt{n}}}$ nodes at depth $i=1$ corresponding to intervals of $\floor{\sqrt{n}}$ points, etc. Letting $n_i$ denote the total size of intervals corresponding to nodes at depth $i$ of the recursion tree, we get
\begin{align*}
n_i \ge n - \sum_{j=1}^{i-1} n ^ {1 - 1/2^j}.
\end{align*}
Since the size of all the intervals at depth $i$ is the same  and it equals to $\floor{n^{1-1/2^i}}$, it follows that the leaves of the tree are at some level $i$ which satisfies $11 \le \floor{n^{1-1/2^i}} < 127$. At that level, $n_i \ge 0.68\cdot n$. For some level $i$ of the recursion tree, we refer to all the $n-n_i$ points which are not contained in any of the intervals of that level as \emph{ignored points}. Denote the collection of intervals containing ignored points by \emph{ignored intervals}.

Let $H'$ be any $(1+\eps)$-spanner on $X$ with hop-diameter 3. To lower bound the number of spanner edges in $H'$, we now consider the same recursion tree but take into consideration the fact that we are working on metric $X$, which is a subspace of $U(l,r)$. Hence, at each level of recursion,
instead of $n$ points, there are at least $127n/128$ points in all the intervals of that level. We call the $j$th interval in the $i$th level \emph{good} if it contains at least $63n_{i,j}/64$ points from $X$. (Recall that we have used $n_{i,j}$ to denote the number of points from $U(l,r)$ in the $j$th interval of the $i$th level.) From the definition of a good interval and the fact the intervals of each recursion level contain together at least $127n/128$ points, it follows that there are at least $n/2$ points contained in the good intervals at the $i$th level. If more than $n/2$ points were contained in bad intervals, and any such interval had at least $1/64$-fraction of points missing, it means that we would have more than $n/128$ points missing; this contradicts that we have at least $127n/128$ points in $X$. In conclusion, at each level $i$, we have at least $0.18n$ points contained in the good intervals which were not ignored.

In \Cref{lemma:lb3}, we worked on the uniform line metric $U(1,n)$ and considered the points inside $[n/4,3n/4]$. We lower bounded the number of points inside $[n/4,3n/4]$ by $15n/32$. Observe that if the interval $[1,n]$ contained at least $63n/64$ points, rather than $n$, the same bound on the number of cross edges would hold, for any $n\ge 128$. This means that every good interval of size $n_{i,j}$ contributes at least $n_{i,j}/18$ cross edges. The same reasoning is applied inductively, so it holds also for $n_{i,j}$ rather than $n$ for any $i,j$.

Next, we analyze the recurrence $T'_3(n)$ representing the contribution of the intervals to the number of spanner edges for $X$. A lower bound on $T'_3(n)$ will provide a lower bound on the number of edges of any spanner $H'$ of $X$ as defined above.
Denote by $\Gamma_i$ all the good intervals in the $i$th level which were not ignored and by $e'_{i,j}$ the number of cross edges contributed by the $j$th interval in the $i$th level. Then, we have

\begin{align*}
T'_3(n) 
&= \sum_{i=0}^{\ell}\sum_{j=1}^{c_i} e'_{i,j}~\ge~ \sum_{i=0}^{\ell}\sum_{j\in \Gamma_i} e_{i,j}~=~ 0.18\cdot T_3(n) \ge 0.18\cdot\frac{n}{40}\log\log{n}
\end{align*}
as claimed. This completes the proof of \Cref{lemma:3prim}.
\end{proof}

\begin{proof}[Completing the proof of \Cref{thm:3}] Note that $\alpha_3(n)=\ceil{\log\log{n}}$ and hence, we will show that $T'_3(n) \ge \frac{n}{1024}\cdot \ceil{\log\log n}$. Suppose without loss of generality that we are working on any $t$-sparse line metric with $n$ points, $U((1,N),t)$, where $N = nt$. Let $H$ be an arbitrary  $(1+\eps)$-spanner for $U((1,N),t)$ with $((1,N),t)$-global hop-diameter 3. We would like to lower bound the number of $((l,r),t)$-global edges required for $H$. Let consider the set of $((l,r),t)$-intervals that are fully inside $[N/4,3N/4]$. The number of such intervals can be lower bounded by $((3N/4-N/4)/t-2\ge n/2-2$, which is larger than the bound of $15n/32$, which we used.
The cross edges will become $((1,N),t)$-global edges and the same argument can be applied to lower bound their number.  The same proof in \Cref{lemma:3prim} gives:
\begin{equation*}
	T'_3(n)\geq 0.18\cdot \frac{n}{40}\log\log{n}\geq   \frac{n}{1024}\cdot \ceil{\log\log n}
\end{equation*}
when $n\geq 1000$, as desired. 
\end{proof}

\section{Lower bound for constant hop-diameter}\label{sec:lbk}

We proceed to prove our main result, which is a generalization of \Cref{thm:main}. In particular, invoking \Cref{thm:k} stated below where $X$ is the uniform line metric $U(1,n)$ gives \Cref{thm:main}.

\begin{theorem}\label{thm:k}
For any two positive integers $n\ge 1000$ and $t$, and any two integers $l,r$ such that $r = l + nt-1$, let $U((l,r),t)$ be a $t$-sparse line metric with $n$ points and let $X$ be a subspace of $U((l,r),t)$ which contains at least $n(1-\frac{1}{2^{k+4}})$ points. Then, for any choice of $\eps \in [0,1/2]$ and any integer $k\ge 2$, any spanner on $X$ with $((l,r),t)$-global hop-diameter $k$ and stretch $1+\eps$ contains at least $T'_k(n) \ge \frac{n}{2^{6\floor{k/2}+4}} \cdot \alpha_{k}(n)$ $((l,r),t)$-global edges which have both endpoints inside $[l,r]$.
\end{theorem}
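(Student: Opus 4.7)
}
The proof proceeds by induction on $k$. The base cases $k=2$ and $k=3$ follow from \Cref{thm:2} and \Cref{thm:3}; one verifies that the density hypothesis $|X| \ge n(1-1/2^{k+4})$ and the target constant $1/2^{6\floor{k/2}+4} = 1/1024$ for $k\in\{2,3\}$ are implied by (respectively, match) the statements of those theorems. For the inductive step with $k \ge 4$, set $b := \alpha_{k-2}(n)$ and partition $[l,r]$ into $n/b$ consecutive \emph{blocks}, each of length $bt$ and containing exactly $b$ representatives of $((l,r),t)$-intervals; this induces a coarser $bt$-sparse line metric on $n/b$ representatives (one per block). The plan is to establish the recurrence
\[
T'_k(n) \;\ge\; \tfrac{n}{b}\cdot T'_k(b) \;+\; \Omega\!\left(\tfrac{n}{2^{6\floor{k/2}+4}}\right)
\]
and then close it via \Cref{lemma:alphakstep}, which yields $\alpha_k(n) = 1 + \alpha_k(b)$. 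By \Cref{cor:sep}, every $(1+\eps)$-spanner path between two points of $X$ inside $[N/4, 3N/4]$ (with $N=nt$) stays inside $[1,N]$; this is the separation we use to guarantee disjointness of within-block edges and cross edges (edges between distinct blocks).

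The first term of the recurrence is obtained by invoking the inductive hypothesis at the same $k$ but smaller size $b$, on each block that lies inside the middle half and retains at least a $1-1/2^{k+4}$ fraction of its $b$ representatives in $X$. A standard double-counting argument on the total number of points missing from $X$ shows that such ``good'' blocks cover a constant fraction of the middle half, and their internal-edge contributions are pairwise disjoint. For the second term, declare a $t$-interval to be \emph{global} if at least one of its points (representative or Steiner) is incident on a cross edge, and \emph{non-global} otherwise; call a block \emph{globally saturated} if all its $b$ $t$-intervals are global. We then split into two complementary cases inside the middle half.

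In Case 1 (at least $(n/b)/2^{k+2}$ globally saturated blocks), a double count over the $b$ global $t$-intervals per saturated block yields at least $n/2^{k+3}$ cross edges directly, which exceeds $n/2^{6\floor{k/2}+4}$ for every $k\ge 4$. In Case 2 (at most $(n/b)/2^{k+2}$ globally saturated blocks), at least $(n/b)(1 - 1/2^{k+2})$ blocks contain a non-global $t$-interval; choosing one non-global representative per such block gives a subspace $X'$ of the coarse $bt$-sparse metric of density $\ge 1 - 1/2^{(k-2)+4}$. The induction hypothesis at level $k-2$ then applies to $X'$ and produces at least $T'_{k-2}(n/b) \ge (n/b)/2^{6\floor{k/2}-2}\cdot \alpha_{k-2}(n/b)$ $((l,r),bt)$-global edges, i.e., cross edges of $H$. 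Because $\alpha_{k-2}(n/b)/b = \Theta(1)$ by \Cref{lemma:alphakstep} and the factor $2^6=64$ gap between the $k-2$ and the $k$ constants is generous, this comfortably delivers the $\Omega(n/2^{6\floor{k/2}+4})$ bound, and the $\floor{k/2}$-fold accumulation of factors of $2^6$ down to the base case produces the $2^{6\floor{k/2}}$ denominator in the final bound.

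The main obstacle is the hop-diameter reduction needed to invoke the $(k-2)$-induction in Case 2: one must argue that for non-global $x\in B_x$ and $y\in B_y$ in distinct blocks, every $(1+\eps)$-spanner path uses at most $k-2$ cross edges. Tracing the sequence $I_0, I_1, \ldots, I_m$ of $t$-intervals visited by the path's $((l,r),t)$-global edges (so $I_0\ni x$, $I_m\ni y$, and $m\le k$), non-globality of $I_0$ forbids any cross edge incident to $I_0$, which forces $I_1\subseteq B_x$ and therefore makes the first transition not a cross edge; symmetrically, the last transition is not a cross edge, so at most $k-2$ of the $m\le k$ transitions cross block boundaries. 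The critical subtlety is that this argument requires ``non-global'' to be defined at the $t$-interval level, encompassing Steiner points in the same $t$-interval---a point-level notion of non-globality would let a Steiner point in $x$'s $t$-interval silently carry a cross edge and break the reduction. The remaining bookkeeping challenge is tracking the density budget across the two nested recursions (on $k$ via $k-2$, and on $n$ within each $k$), and verifying that Case 1 directly wins whenever the density of the coarse subspace $X'$ drops below what the $(k-2)$-hypothesis requires.
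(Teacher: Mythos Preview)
Your proposal has a genuine gap in the first (within-block) term of the recurrence. You write $T'_k(n)\ge (n/b)\,T'_k(b)+\Omega(n/2^{6\lfloor k/2\rfloor+4})$, but you only justify the first summand on ``good'' blocks in the middle half, i.e., on a \emph{constant fraction} $c<1$ of the $n/b$ blocks. The recurrence $T'_k(n)\ge c\,(n/b)\,T'_k(b)+n/D$ does \emph{not} close to $\Omega\big((n/D')\,\alpha_k(n)\big)$ for any constant $D'$: plugging $T'_k(b)=(b/D')(\alpha_k(n)-1)$ forces $D'/D\ge (1-c)\,\alpha_k(n)+c$, which fails for large $n$. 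In other words, losing a fixed constant factor at every level of the recursion in $n$ compounds across the $\alpha_k(n)$ levels and destroys the bound. A related secondary issue: when you form $X'$ in Case~2 you must also ensure $X'\subseteq X$ (otherwise $H$ gives no spanner guarantee on pairs in $X'$), and your density accounting does not track this.

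The paper sidesteps both problems by splitting the statement into two assertions and coupling them. Assertion~1 is the bound for the \emph{full} $t$-sparse metric; there every block has full density, so the within-block recursion carries coefficient exactly $\lfloor n/b\rfloor$ with no loss, and the cross-edge term invokes Assertion~2 at level $k-2$ on the coarse metric. Assertion~2 (the subspace case you are trying to prove directly) does \emph{not} recurse on $T'_k$; instead it takes the fixed recursion tree of $T_k$ from Assertion~1, shows that at \emph{each} level good intervals cover a constant fraction of the points, and that the cross-edge contribution $e_{i,j}$ derived in Assertion~1 still holds for a good interval. Summing over levels then yields $T'_k(n)\ge c\cdot T_k(n)$ with a single constant $c$, not $c^{\alpha_k(n)}$. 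Also note the paper applies the within-block induction to \emph{all} blocks, not just the middle half; the separation lemma is only needed for the cross-edge count.

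Your remark that ``global'' should be defined at the $t$-interval level (so that a Steiner point sharing $a$'s $t$-interval cannot carry a hidden cross edge) is a sound refinement of the hop-reduction step; it does not, however, repair the compounding issue above.
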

\begin{remark}
Recall that we consider Steiner spanners, which could possibly contain additional Steiner points from the uniform line metric. 
\end{remark}
\begin{remark}
\Cref{thm:k} can be extended to $\eps > 1/2$. The only required change in the proof is to decrease the lengths of intervals by a factor of $1+\eps$, as provided by \Cref{lemma:sep}; it is readily verified that, as a result, the lower bound decreases by a factor of $\Theta(\eps)$.
\end{remark}
\begin{proof}
We will prove the theorem by double induction on $k\ge 2$ and $n$. 
The base case for $k=2$ and $k=3$ and every $n$ is proved in \Cref{thm:2,thm:3}, respectively. 

For every $k\ge 4$, we shall prove the following two assertions.
\begin{assertions}
\item\label{assertion:1} For any two positive integers $n$ and $t$, and any two integers $l,r$ such that $r = l + nt-1$, let $U((l,r),t)$ be a $t$-sparse line metric with $n$ points. Then, for any choice of $ \eps \in [0,1/2]$, any spanner on $U((l,r),t)$ with $((l,r),t)$-global hop-diameter $k$ and stretch $1+\eps$ contains at least $T_k(n) \ge \frac{n}{2^{6\floor{k/2}+2}} \alpha_{k}(n)$ $((l,r),t)$-global edges which have both endpoints inside $[l,r]$. 

\item\label{assertion:2} For any two positive integers $n$ and $t$, and any two integers $l,r$ such that $r = l + nt-1$, let $U((l,r),t)$ be a $t$-sparse line metric with $n$ points and let $X$ be a subspace of $U((l,r),t)$ which contains at least $n(1-\frac{1}{2^{k+4}})$ points. Then, for any choice of $\eps \in [0,1/2]$, any spanner on $X$ with $((l,r),t)$-global hop-diameter $k$ and stretch $1+\eps$ contains at least $T'_k(n) \ge \frac{n}{2^{6\floor{k/2}+4}} \cdot \alpha_{k}(n)$ $((l,r),t)$-global edges which have both endpoints inside $[l,r]$.
\end{assertions}

For every $k \ge 4$, we first prove the first assertion, which relies on the second assertion for $k-2$. Then, we prove the second assertion which relies on the first assertion for $k$. 
We proceed to prove \cref{assertion:1}.

\paragraph*{Proof of \cref{assertion:1}.}
Suppose without loss of generality that we are working on any $t$-sparse line metric $U((1,N),t)$. Let $H$ be an arbitrary $(1+\eps)$-spanner for $U((1,N),t)$ with $((1,N),t)$-global hop-diameter $k$.

For the base case, we take $1 \le n \le 10000$. We have $\frac{n}{2^{6\floor{k/2}}+2} \alpha_{k}(n)$, which is at most $\frac{n}{2^{6\floor{k/2}}+2}\log^*(n)\le n-1$, a trivial lower bound on the number of the edges of any spanner.

Next, we prove the induction step. We shall assume the correctness of the two statements: (i) for $k$ and all smaller values of $n$,
and (ii) for $k' < k$ and all values of $n$. 
Let $N \coloneqq nt$ and let $b\coloneqq\alpha_{k-2}(n)$. Divide the the interval $[1,N]$ into consecutive $((1,N),bt)$-intervals containing $b$ points: $[1, bt], [bt+1, 2bt]$, etc. We would like to prove that the number of spanner edges is lower bounded by recurrence
\begin{align*}
T_k(n) = \floor*{\frac{n}{\alpha_{k-2}(n)}}\cdot T_{k}(\alpha_{k-2}(n)) + \Omega{\left(\frac{n}{2^{3k}}\right)},
\end{align*}
with the base case $T_k(n) = \frac{n}{2^{6\floor{k/2}+2}}\alpha_k(n)$ for $n \le 10000$.

There are $\floor{n/b}$ $((1,N),bt)$-intervals containing exactly $b$ points. For any $j$ such that $1 \le j \le \floor{n/b}$, the $j$th $((1,N),bt)$-interval is $[(j-1)bt+1, jbt]$. Using inductively the \cref{assertion:1} for $k$ and a value $b < n$, any spanner on $U((j-1)bt+1, jbt)$ contains at least $T_k(b)$ edges that are inside $[(j-1)bt+1, jbt]$. This means that all the $((1,N),bt)$-intervals will contribute at least $\floor{n/b} \cdot T_k(b)$ spanner edges that are mutually disjoint and in addition do not go outside of $[1,N]$.

We will show that there are $\Omega(n/2^{3k})$  edges that have endpoints in two different $((1,N),bt)$-intervals, i.e. edges that are $((1,N),bt)$-global. Since these edges are $((1,N),bt)$-global, they are disjoint from the spanner edges considered in the term $\floor{n/b}\cdot T_3(b)$. We shall focus on points that are inside $((1,N),bt)$-intervals fully inside $[N/4, 3N/4]$; 
denote the number of such points by $p$. We have $p\ge n/2-2\alpha_{k-2}(n)$, but we will use a   weaker bound:
\begin{equation} \label{e:pnt}
p\ge n/4. 
\end{equation}

\begin{definition}
A point that is incident on at least one $((1,N),bt)$-global edge is called a $((1,N),bt)$-global point.
\end{definition}

Among the $p$ points inside inside $[N/4, 3N/4]$, denote by $p'$ the number of $((1,N),bt)$-global points. Let $p'' = p - p'$, and $m$ be the number of $((1,N),bt)$-global edges incident on the $p$ points. 
Since each $((1,N),bt)$-global point contributes at least one $((1,N),bt)$-global edge and each such edge is counted at most twice,
we have 
\begin{equation}\label{e:global}
m ~\ge~ p' / 2.
\end{equation}

Next, we prove that 
\begin{equation} \label{e:main}
m ~\ge~ \frac{n}{2^{6\floor{k/2}+1}}, \mbox{~~if~} \ceil*{\frac{p''}{b}} 
~\ge~ \left(1-\frac{1}{2^{k+2}}\right)\cdot\ceil*{\frac{p}{b}}
\end{equation}
Recall that we have divided $[1,N]$ into consecutive $((1,N),bt)$-intervals containing $b\coloneqq\alpha_{k-2}(n)$ points.
Consider now all the $((1,N),bt)$-intervals that are fully inside $[N/4,3N/4]$, and denote this collection of $((1,N),bt)$-intervals by $\mathcal{C}$. Let $l'$ (resp. $r'$) be the leftmost (resp. rightmost) point 
of the leftmost (resp. rightmost) interval in $\mathcal{C}$; note that $l'$ and $r'$ may not coincide with points of the input metric,
they are simply the leftmost and rightmost boundaries of the intervals in $\mathcal{C}$. 

\paragraph*{Constructing a new line metric.}
For each $((1,N),bt)$-interval $I$ in $\mathcal{C}$, if $I$ contains a point that is not $((1,N),bt)$-global, assign an arbitrary such point in $I$ as its representative; otherwise, assign an arbitrary point as its representative. The collection  $\mathcal{C}$ of $((1,N),bt)$-intervals, together with the set of representatives uniquely defines $(bt)$-sparse line metric, $U((l',r'),bt)$.  
This metric has $\ceil{p/b}$ $((1,N),bt)$-intervals, since there are $\ceil{p/b}$ intervals covering $p$ points in the input $t$-sparse metric $U((1,N),t)$ inside the interval $[N/4,3N/4]$. Recall from \Cref{def:tsparse} that a $bt$-sparse metric is uniquely defined given its $((1,N),bt)$-intervals and representatives. 
Let $X$ be the subspace of $U((l',r'),bt)$ induced by the representatives of all intervals in $\mathcal{C}$ that contain points that are not $((1,N),bt)$-global and using \Cref{e:main},
we have 
\begin{equation} \label{eq:fraction}
|X| ~\ge~ \ceil*{\frac{p''}{b}} 
~\ge~ \left(1-\frac{1}{2^{k+2}}\right)\cdot\ceil*{\frac{p}{b}}
\end{equation}

Recall that $H$ is an arbitrary $(1+\eps)$-spanner for $U((1,N),t)$ with $((1,N),t)$-global hop-diameter $k$.
Let $a$ and $b$ be two arbitrary points in $X$, and denote their corresponding $((1,N),bt)$-intervals by $A$ and $B$, respectively.
Since $a$ (reps., $b$) is not $((1,N),bt)$-global, it can be adjacent either to points outside of $[1,N]$ or to points inside $A$ (resp., $B$). 
By \Cref{cor:sep} and since $a,b\in[N/4,3N/4]$, any spanner path with stretch $(1+\eps)$ connecting $a$ and $b$ must remain inside $[1,N]$. Hence, any $(1+\eps)$-spanner path in $H$ between $a$ and $b$ is of the form $(a, a',\dots, b', b)$, where $a' \in A$ (resp. $b' \in B$). Consider now the same path in the metric $X$. It has at most $k$ hops, where the first and the last edges are not $((1,N),bt)$-global. Thus, although this path contains at most $k$ $((1,N),t)$-global edges in $U((1,N),t)$, it has at most $k-2$ $((1,N),bt)$-global edges in $X$. 
It follows that $H$ is a (Steiner) $(1+\eps)$-spanner with $((1,N),bt)$-global hop-diameter $k-2$ for $X$. See \Cref{fig:lb4} for an illustration.

Denote by $n' \coloneqq \ceil{p/b}$ the number of points in $U((l',r'), bt)$.
Since $p\ge n/4$, it follows that $n' \ge \ceil{n/(4b)}$. 
By \labelcref{eq:fraction}, $X$ is a subspace of $U((l',r'), bt)$, 
and its size is at least a $(1-1/2^{k+2})$-fraction (i.e., a $(1-1/2^{(k-2)+4})$-fraction) of that of $U((l',r'), bt)$.
Hence, by the induction hypothesis of \cref{assertion:2} for $k-2$,
 we know that any spanner on $X$ with $((l',r'),bt)$-global hop-diameter $k-2$ and stretch $1+\eps$ contains at least $T'_{k-2}(n') \ge \frac{n'}{2^{6\floor{(k-2)/2}+4}} \cdot \alpha_{k-2}(n')$ $((l',r'),bt)$-global edges which have both endpoints inside $[l',r']$.
Since every $((l',r'),bt)$-global edge is also a $((1,N),bt)$-global edge, 
we conclude with the following lower bound on the number of $((1,N), bt)$-global edges required by $H$:

\begin{align*}
T'_{k-2}\left(n'\right) &\ge \frac{n'}{2^{6\floor{(k-2)/2}+4}}\cdot\alpha_{k-2}\left(n'\right)\\
&\ge \frac{n}{4\cdot 2^{6\floor{(k-2)/2}+4}\cdot\alpha_{k-2}(n)}\cdot\alpha_{k-2}{\left(\ceil*{\frac{n}{4\alpha_{k-2}(n)}}\right)}\\
&\ge \frac{n}{8\cdot 2^{6\floor{(k-2)/2}+4}}\\
&= \frac{n}{2^{6\floor{k/2}+1}}
\end{align*}
The last inequality follows since, when $k\ge 4$, the ratio between $\alpha_{k-2}(\ceil{n/4\alpha_{k-2}(n)})$ and $\alpha_{k-2}(n)$ can be bounded by $1/2$ for sufficiently large $n$ (i.e. larger than the value considered in the base case).
In other word, we have shown that whenever $\ceil{{p''}/{b}} 
~\ge~ (1-{1}/{2^{k+2}})\cdot\ceil{{p}/{b}}$, the number of the $((1,N),bt)$-global edges incident on the $p$ points inside $[N/4,3N/4]$ is lower bounded by ${n}/{2^{6\floor{k/2}+1}}$; we have thus proved \labelcref{e:main}.

Recall (see \labelcref{e:pnt}) that we lower bounded the number $p$ of points inside $[N/4, 3N/4]$ as $p\ge n/4$. We consider two complementary cases: either $\ceil{p''/b} ~\ge~ (1-{1}/{2^{k+2}})\cdot\ceil{p/b}$, or $\ceil{p''/b} ~<~ (1-{1}/{2^{k+2}})\cdot\ceil{p/b}$, where $p''$ is the number of points in $[N/4,3N/4]$ that are not $((1,N),bt)$-global. In the former case (i.e. when $\ceil{p''/b} ~\ge~ (1-{1}/{2^{k+2}})$), 
by \labelcref{e:main}, we have the number of $((1,N),bt)$-global edges is lower bounded by $n/2^{6\floor{k/2}+1}$. In the latter case,  we have
\begin{align*}
\frac{p-p'}{b} - 1 < \floor*{\frac{p-p'}{b}} = \frac{p''}{b} < \left(1-\frac{1}{2^{k+2}}\right)\cdot\ceil*{\frac{p}{b}}  < \left(1-\frac{1}{2^{k+2}}\right)\cdot\frac{p}{b} + 1.
\end{align*}
In other words, we can lower bound $p'$ by $p/2^{k+2} - 2b$. From \labelcref{e:global} and using that $p \ge n/4$,
the number of $((1,N),bt)$-global edges is lower bounded by $n/2^{k+5}-\alpha_{k-2}(n)$. Since the former bound is always smaller for $n$ sufficiently large (i.e. larger than the value considered in the base case), we shall use it as a lower bound on the number of $((1,N),bt)$-global edges required by $H$. We note that every $((1,N),bt)$-global edge is also $((1,N),t)$-global, as required by \cref{assertion:1}. It follows that
\begin{align*}
T_k(n) &\ge \floor*{\frac{n}{\alpha_{k-2}(n)}}\cdot\frac{\alpha_{k-2}(n)}{2^{6\floor{k/2}+2}}\cdot\alpha_k(\alpha_{k-2}(n)) + \frac{n}{2^{6\floor{k/2}+1}}\\
&\ge \left(\frac{n}{\alpha_{k-2}(n)}-1 \right)\cdot \frac{\alpha_{k-2}(n)}{2^{6\floor{k/2}+2}} \cdot (\alpha_k(n)-1) + \frac{n}{2^{6\floor{k/2}+1}}\\
&\ge \frac{n}{2^{6\floor{k/2}+2}}\cdot\alpha_k(n) - \frac{2n}{2^{6\floor{k/2}+2}}  + \frac{n}{2^{6\floor{k/2}+1}}\\
&= \frac{n}{2^{6\floor{k/2}+2}} \alpha_k(n)
\end{align*}
For the second inequality we have used \Cref{lemma:alphakstep}, and for the third, the fact that $\alpha_{k-2}(n) \cdot (\alpha_{k}(n) - 1) \le n$ for sufficiently large $n$ (i.e. larger than the value considered in the base case).
This concludes the proof of \cref{assertion:1}. 

\begin{figure}[!bht]
\centering
\input{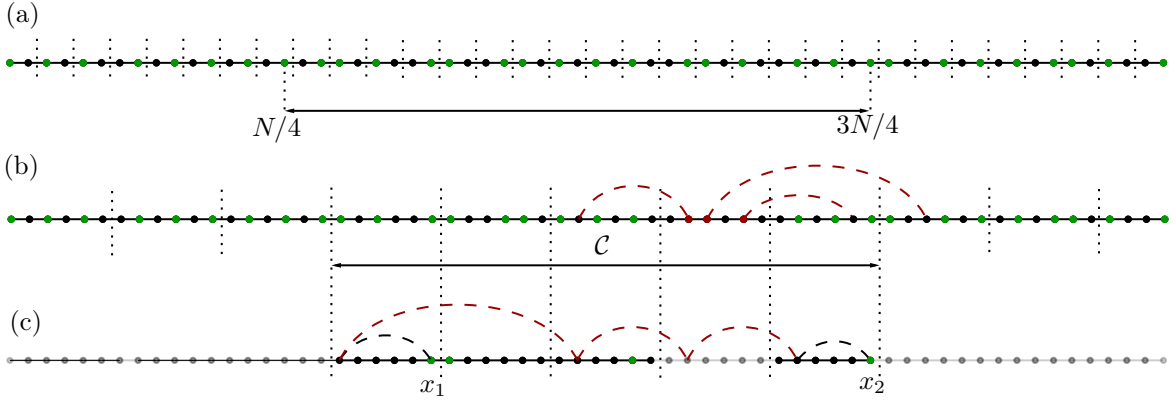}
\caption{Constructing a new line metric and invoking the induction hypothesis. \textbf{(a)} We have $n=32$, $k=5$, and a $2$-sparse line metric $U((1,64), 2)$ with representatives of each $((1,64), 2)$-interval highlighted in green. \textbf{(b)} Since $b = \alpha_{k-2}(n) = 3$, we consider a collection of $((1,64), 6)$-global intervals inside $[N/4, 3N/4]$, denoted by $\mathcal{C}$. The seventh block contains only $((1,64), 6)$-global points (highlighted in red) as each of them is incident on a $((1,64), 6)$-global edge. \textbf{(c)} The new line metric is $6$-sparse line metric $U((19, 48), 6)$ consisting of 4 green points. Finally, we use the induction hypothesis of \cref{assertion:2} for $k=3$ to lower bound the number of $((1,N), 6)$-global edges. A spanner path between $x_1$ and $x_2$ consisting of 5 edges, 3 of which are  $((1,N), 6)$ global is depicted. }
\label{fig:lb4}
\end{figure}
\paragraph*{Proof of \cref{assertion:2}.~}

Suppose without loss of generality that we are working on any $t$-sparse line metric $U((1,N),t)$. Let $H$ be an arbitrary $(1+\eps)$-spanner for $U(1,N)$ with $((1,N),t)$-global hop-diameter $k$.

We shall inductively assume the correctness of \cref{assertion:1} and \cref{assertion:2}: (i) for $k$ and all smaller values of $n$,
and (ii) for $k' < k$ and all values of $n$.

Recall the recurrence we used in the proof of \cref{assertion:1}, $T_k(n) = \floor{n/\alpha_{k-2}(n)}\cdot T_k(\alpha_{k-2}(n)) + \Omega(n/2^{3k})$, which provides a lower bound on the number of $((l,r),t)$-global edges of $H$.
The base case for this recurrence is whenever $n<10000$. Consider the recursion tree of $T_k(n)$ and denote its depth by $\ell$ and the number of nodes at depth $i$ by $c_i$. In addition, denote by $n_{i,j}$ the number of points in the $j$th interval of the $i$th level and by $e_{i,j}$ the number of $((1,N),bt)$-global edges contributed by this interval. We have that the contribution of an interval is $n_{i,j}/2^{6\floor{k/2}+1}$. By definition, we have 
\begin{align*}
T_k(n) 
&= \sum_{i=1}^{\ell}\sum_{j=1}^{c_i} e_{i,j} 
\ge \frac{n}{2^{6\floor{k/2}+2}}\alpha_k(n)
\end{align*}

Let $H'$ be any $(1+\eps)$ spanner on $X$ with $((1,N),t)$-global hop-diameter $k$. To lower bound the number of spanner edges in $H'$, we now consider the same recursion tree, but take into consideration the fact that we are working on metric $X$, which is a subspace of $U((1,N),t)$. 
This means that at each level of recursion, instead of $n$ points, there is at least $n(1-1/2^{k+4})$ points in $X$. 
The contribution of the $j$th interval in the $i$th level is denoted by $e'_{i,j}$. 
We call the $j$th interval in the $i$th level \emph{good} if it contains at least $n_{i,j}(1-1/2^{k+3})$ points from $X$. (Recall that we have used $n_{i,j}$ to denote the number of points from $U(l,r)$ in the $j$th interval of the $i$th level.) From the definition of good interval and the fact that each level of recurrence contains at least $n(1-1/2^{k+4})$ points, it follows that there are at least $n/2$ points contained in the good intervals at the $i$th level. Denote the collection of all the good intervals at the $i$th level by $\Gamma_i$.

Recall that we are working with recurrence $T_k(n) = \floor{n/\alpha_{k-2}(n)}\cdot T_k(\alpha_{k-2}(n)) + \Omega(n/2^{3k})$. In particular, in the first level of recurrence, we consider the contribution of $n$ points, whereas in the second level, we consider the contribution of $\floor{n/\alpha_{k-2}(n)}\cdot \alpha_{k-2}(n)$ points. Denote by $n_i$ the number of points whose contribution we consider in the $i$th level of recurrence. Then, we have $n_1 = n$, $n_2 = \floor{n/\alpha_{k-2}(n)}\cdot \alpha_{k-2}(n) \ge n - \alpha_{k-2}(n)$. Denote by $\alpha_{k-2}^{(j)}(n)$ value of $\alpha_{k-2}(\cdot)$ iterated on $n$, i.e. $\alpha_{k-2}^{(0)}(n) = n$, $\alpha_{k-2}^{(1)}(n) = \alpha_{k-2}(n)$, $\alpha_{k-2}^{(2)}(n) = \alpha_{k-2}(\alpha_{k-2}(n))$, etc. In general, for $i\ge 2$, we have
\begin{align*}
n_i &\ge n- \sum_{j=2}^{i}\frac{n\alpha_{k-2}^{(j-1)}(n)}{\alpha_{k-2}^{(j-2)}(n)}\\
&\ge n - n \cdot\sum_{j=2}^{i}\frac{\ceil*{\log^{(j-1)}(n)}}{\ceil*{\log^{(j-2)}(n)}}.
\end{align*}
We observe that there is an exponential decay between the numerator and denominator of terms in each summand and that terms grow with $j$. Since we do not consider intervals in the base case, we also know that $\ceil{\log^{(i-1)}(n)} \ge 10000$, meaning that the largest term in the sum is $10000 / 2^{9999}$. By observing that every two consecutive terms increase by a factor larger than 2, we conclude that $n_i \ge 0.99n$.
Since at each level there are at least $n/2$ points inside of good intervals, this means that there are at least  $0.49n$ points inside of good intervals which were not ignored. Denote by $\Gamma_i$ the set of good intervals in the $i$th level whose contribution is not ignored. Then we have
\begin{align*}
T'_k(n) 
&= \sum_{i=1}^{\ell}\sum_{j=1}^{c'_i} e'_{i,j} 
~\ge~ \sum_{i=1}^{\ell}\sum_{j\in \Gamma_i} e_{i,j}
~\ge~ 0.49\cdot T_k(n) \ge \frac{n}{2^{6\floor{k/2}+4}}\alpha_k(n),
\end{align*}
as claimed. 
This concludes the proof of \cref{assertion:2}.

We have thus completed the inductive step for $k$.
\end{proof}

\bibliographystyle{alpha}

\bibliography{refs,spanner,latex8}
\appendix
\section{Tradeoff using two-parameter Ackermann function}\label{sec:tradeoff}
In this \namecref{sec:tradeoff}, we prove that the tradeoff of \cite{AS87} of $k$ vs. $\Omega(n \alpha_k(n))$ between hop-diameter and number of edges implies the tradeoff of \cite{Yao82} and \cite{CG06} of $\Omega(\alpha(m,n))$ vs. $m$. We also prove a variant of this result, 
relevant to our lower bound (\Cref{thm:main}) that has an exponential on $k$ slack in the number of edges.
Specifically, we prove \Cref{lemma:tradeoff} and \Cref{cor:tradeoff}.

We start with making the following simple claim.  
\begin{claim}\label{clm:ackermann}
For every $i \ge 0$ and every $j \ge 4$, we have $A(i+1, j) \ge A(i, 2j)$.
\end{claim}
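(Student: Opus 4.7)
The plan is to peel off one layer of the recursive definition of $A$ and reduce the claim to a trivial inequality. Since $A(i+1, j) = A(i, A(i+1, j-1))$ by definition, and provided $A(i, \cdot)$ is non-decreasing in its second argument, it suffices to show that $A(i+1, j-1) \ge 2j$ for every $i \ge 0$ and $j \ge 4$; then $A(i+1, j) = A(i, A(i+1, j-1)) \ge A(i, 2j)$, which is exactly the claim.

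To produce this bound, I would first record two standard monotonicity facts about $A$, each established by a short induction from the recurrence: (a) for every fixed $i \ge 0$, $A(i, x)$ is non-decreasing in $x$; and (b) for every fixed $x \ge 2$, $A(i, x)$ is non-decreasing in $i$. Using (b), together with the identity $A(1, n) = 2^n$ for $n \ge 1$ (which is itself a one-line induction using $A(1, n) = A(0, A(1, n-1)) = 2 A(1, n-1)$ with $A(1, 1) = 2$), we have $A(i+1, j-1) \ge A(1, j-1) = 2^{j-1}$. The elementary inequality $2^{j-1} \ge 2j$ holds for every $j \ge 4$ (equality at $j = 4$, with the gap widening thereafter). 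Applying (a) then finishes the job: $A(i, 2j) \le A(i, A(i+1, j-1)) = A(i+1, j)$.

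The only step requiring genuine care is the inductive proof of (a) and (b), because the Ackermann-style recurrence collapses several small values (for instance $A(i, 1) = 2$ and $A(i, 2) = 4$ for every $i \ge 0$), so monotonicity in the first argument is only non-strict near the bottom, and the base of the induction in (b) has to be taken at $x = 2$ rather than at $x = 0$ or $x = 1$. Fortunately non-strict monotonicity is all that is needed, so this is a bit of bookkeeping rather than a real obstacle; once (a) and (b) are in hand, the argument reduces to the one-line unfolding of the definition together with $2^{j-1} \ge 2j$ for $j \ge 4$, and no further subtlety arises.
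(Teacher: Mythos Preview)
Your proposal is correct and follows essentially the same approach as the paper: unfold $A(i+1,j)=A(i,A(i+1,j-1))$, bound $A(i+1,j-1)\ge A(1,j-1)=2^{j-1}\ge 2j$ for $j\ge 4$, and conclude via monotonicity of $A(i,\cdot)$. The paper's proof is the same two-line argument, only it leaves the two monotonicity facts (a) and (b) implicit, whereas you spell them out.
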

\begin{proof}
Observe that $A(1,j) = 2^j$. We have $A(i+1, j) = A(i, A(i+1, j-1)) \ge A(i, 2^{j-1}) \ge A(i, 2j)$, whenever $j \ge 4$.
\end{proof}

\begin{lemma}\label{lemma:tradeoff}
Let $\mathcal{G}$ be a graph family and let $k'$, $n_0$ be a constant such that for every integer $n'\ge n_0$ any $k'$-hop spanner on an $n'$-vertex graph $G_{n'}$ from $\mathcal{G}$ has at least $n' \alpha_{k'}(n')$ edges. Then, for any choice of integers $n\ge n_0$ and $m\ge n$, any $m$-edge spanner on an $n$-vertex graph $G_{n}$ from $\mathcal{G}$ has hop diameter of at least $\alpha(m,n)$.
\end{lemma}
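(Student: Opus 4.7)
My plan is to argue by contradiction: suppose some $G_n\in\mathcal{G}$ with $n\ge n_0$ admits an $m$-edge spanner of hop-diameter $k<\alpha(m,n)$. Applying the lemma's hypothesis with this particular $k$, the spanner must already contain at least $n\alpha_k(n)$ edges, so $\lceil m/n\rceil\ge \alpha_k(n)$. On the other hand, by Definition \ref{def:invAck}, $k<\alpha(m,n)$ is exactly the statement $A(k,4\lceil m/n\rceil)\le \log_2 n$. Since $A(k,\cdot)$ is monotone nondecreasing, chaining these two facts reduces the entire argument to establishing the single estimate
\begin{equation*}
A\bigl(k,\,4\alpha_k(n)\bigr) \;\ge\; n,
\end{equation*}
which already contradicts $A(k,4\lceil m/n\rceil)\le \log_2 n<n$ (for $n\ge 2$).

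The main estimate will be proved by splitting on the parity of $k$ and iterating Claim \ref{clm:ackermann}. Writing $s=\alpha_k(n)$, if $k=2p$ then the definition of $\alpha_{2p}$ gives $A(p,s)\ge n$, and $p$-fold application of Claim \ref{clm:ackermann} yields
\begin{equation*}
A(2p,4s) \;\ge\; A(2p-1,8s) \;\ge\; \cdots \;\ge\; A(p,\,2^{p+2}s) \;\ge\; A(p,s)\;\ge\; n,
\end{equation*}
where all intermediate second arguments exceed the threshold $4$ required by the claim. If $k=2p+1$, the definition gives $B(p,s)\ge n$, and the analogous iteration produces $A(2p+1,4s)\ge A(p+1,\,2^{p+3}s)$; it then remains to check that $A(p+1,\cdot)$ dominates $B(p,\cdot)$ on the relevant range.

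The main obstacle is precisely this last comparison between the two Ackermann-style hierarchies in the odd case. I plan to verify the auxiliary inequality $A(p+1,t)\ge B(p,t)$ (for $t$ above a small absolute threshold) by induction on $p$: the base case is $A(1,t)=2^{t}\ge t^{2}=B(0,t)$ for $t\ge 4$, and the inductive step unrolls both recursions one level, applies the induction hypothesis $A(p,\cdot)\ge B(p-1,\cdot)$, and uses monotonicity of $B(p-1,\cdot)$ to conclude $A(p+1,t)=A(p,A(p+1,t-1))\ge B(p-1,B(p,t-1))=B(p,t)$. Any small-$n$ regimes where the threshold on $s$ fails can be absorbed by enlarging $n_0$. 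Once this comparison is in hand, both parity cases deliver $A(k,4\alpha_k(n))\ge n$, contradicting the hypothesis and completing the proof of the lemma.
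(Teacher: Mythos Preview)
Your overall contrapositive strategy matches the paper's: fix the hop-diameter $k$, use the hypothesis to get $\alpha_k(n)\le m/n$, and then compare with the definition of $\alpha(m,n)$. The divergence is only in how the final estimate is established.

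There is a genuine gap in the odd case. Your inductive step for $A(p+1,t)\ge B(p,t)$ reads
\[
A(p+1,t)=A(p,A(p+1,t-1))\ \ge\ B(p-1,A(p+1,t-1))\ \ge\ B(p-1,B(p,t-1))=B(p,t),
\]
where the first inequality uses the induction hypothesis on $p$ and the second uses monotonicity of $B(p-1,\cdot)$. But that monotonicity step needs $A(p+1,t-1)\ge B(p,t-1)$, which is exactly the claim at $t-1$ for the \emph{same} $p$. So you are implicitly running an inner induction on $t$, and that inner induction requires a base case $A(p+1,4)\ge B(p,4)$ for every $p$, which your scheme does not supply; you cannot reach it from $t=3$ since, e.g., $A(2,3)=16<256=B(1,3)$. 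This is a statement about the functions $A$ and $B$ alone, so ``absorbing into $n_0$'' does not help. (A minor slip: $p$ applications of Claim~\ref{clm:ackermann} to $A(2p+1,4s)$ give $A(p+1,2^{p+2}s)$, not $2^{p+3}s$.)

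The paper sidesteps the parity split and the $A$-versus-$B$ comparison entirely. It observes that $A(k,\alpha_{2k}(n))\ge n$ holds tautologically by the definition of $\alpha_{2k}$, so $k$ already lies in $\{i:A(i,\alpha_{2k}(n))\ge n\}$ and hence $k$ dominates the minimum. It then relaxes along the chain $\alpha_{2k}(n)\le\alpha_k(n)\le m/n\le 4\lceil m/n\rceil$ (and weakens $n$ to $\log n$) to reach $k\ge\alpha(m,n)$ directly, using only monotonicity of $A(i,\cdot)$ in its second argument and the monotonicity of $\alpha_j(n)$ in $j$. No iteration of Claim~\ref{clm:ackermann} and no explicit comparison with the $B$-hierarchy is needed.
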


\begin{proof}
Let $H$ be an arbitrary $m$-edge spanner on $G_{n}$ from $\mathcal{G}$ as in the lemma statement.
If $n\alpha_0(n) \le m$, then $\alpha(m,n)$ is a small constant, and the hop-diameter of $H$ is trivially $\Omega(1)$. 
We henceforth assume that $n\alpha_0(n) > m$, hence there is a unique $k$ such that 
\begin{equation}
n\cdot\alpha_{k}(n) ~\le~ m ~<~  n\cdot \alpha_{k-1}(n).\label{eq:region}
\end{equation}
From the lemma statement, the hop-diameter of $H$ is greater than $k-1$, i.e., it is at least: 

\begin{align*}
k &\ge \min\left\{i \mid A(i, \min\left\{s \mid A(k, s) \ge n \right\}) \ge n \right\}\\
&\ge \min\left\{i \mid A(i, \alpha_{2k}(n)) \ge n \right\} &\text{ (by definition of $\alpha_{2k}(n)$)}\\
&\ge \min\left\{i \mid A(i, \alpha_{2k}(n)) > \log{n} \right\}\\
&\ge \min\left\{i \mid A(i, \alpha_{k}(n)) > \log{n} \right\}\\
&\ge \min\left\{i \mid A\left(i, \frac{m}{n}\right) > \log{n} \right\} &\text{ (\Cref{eq:region}) }\\
&\ge \min\left\{i \mid A\left(i, 4\ceil*{\frac{m}{n}}\right) > \log{n} \right\}\\
&= \alpha(m,n)
\end{align*}
This completes the proof of \Cref{lemma:tradeoff}.
\end{proof}

We proceed to proving \Cref{cor:tradeoff}, which is a variant of \Cref{lemma:tradeoff},  relevant to our lower bound (\Cref{thm:main}) that has an exponential on $k$ slack in the number of edges.
By \Cref{thm:main}, any $(1+\eps)$-spanner with hop-diameter $k$ on any $n$-point uniform line metric must have at least $\frac{n}{2^{6\floor{k/2}+4}} \alpha_{k}(n)$ edges. (This refined bound on the number of edges, which does not use the $\Omega$-notation, is due to \Cref{thm:k}, which is a generalization of \Cref{thm:main}.)

\begin{corollary}\label{cor:tradeoff}
For any two positive integers $m$ and $n$ such that $m < n^2/32$, let $k$ be the unique integer such that $\frac{n}{2^{6\floor{k/2}+4}}\cdot \alpha_{k}(n) \le m < \frac{n}{2^{6\floor{(k-1)/2}+4}}\cdot \alpha_{k-1}(n)$. For any choice of $\eps \in [0, 1/2]$, any $m$-edge $(1+\eps)$-spanner for the uniform line metric with $n$ points must have hop diameter of at least $\Omega(\alpha(m,n)-k)$.
\end{corollary}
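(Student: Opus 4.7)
}
The plan is to mirror the structure of the proof of \Cref{lemma:tradeoff}, but with extra bookkeeping to absorb the exponential-in-$k$ slack factor $2^{6\floor{k/2}+4}$ that appears in our lower bound. Let $H$ be an arbitrary $m$-edge $(1+\eps)$-spanner for the uniform line metric $U(n)$, and let $h$ denote its hop-diameter. First, I will invoke \Cref{thm:main} (in fact the refined form from \Cref{thm:k}) to obtain the inequality
\begin{equation*}
\frac{n}{2^{6\floor{h/2}+4}}\cdot\alpha_h(n) \;\le\; m.
\end{equation*}
Combining this with the defining inequality of $k$, namely $m < \frac{n}{2^{6\floor{(k-1)/2}+4}}\cdot\alpha_{k-1}(n)$, and using that the functions $i\mapsto 2^{6\floor{i/2}+4}$ grow while $i\mapsto\alpha_i(n)$ decays, will immediately give $h \ge k$, reproducing the easy side of the bound.

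To extract the dependence on $\alpha(m,n)$, I will treat the case $h=2j$; the case $h=2j+1$ is handled identically using the inequality $\alpha_{2j+1}(n)\ge \alpha_{2(j+1)}(n)$ together with the recursion for $A$. Rearranging the inequality above yields $\alpha_{2j}(n) \le m\cdot 2^{6j+4}/n$. By the definition $\alpha_{2j}(n) = \min\{s\ge 0 : A(j,s)\ge n\}$ and the monotonicity of $A$ in its second argument, this translates to
\begin{equation*}
A\!\left(j,\; \frac{m\cdot 2^{6j+4}}{n}\right) \;\ge\; n \;>\; \log n.
\end{equation*}
The next step is the key technical move: I will use \Cref{clm:ackermann} (the inequality $A(i+1,y)\ge A(i,2y)$ for $y\ge 4$) to convert the factor $2^{6j+4}$ in the second argument into at most $6j+2$ additive increments of the first argument. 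Since $m/n \ge 1$, each application of the claim is valid as long as the second argument stays $\ge 4$, which is guaranteed for all $6j+2$ iterations. Iterating gives
\begin{equation*}
A\!\left(j,\; \frac{m\cdot 2^{6j+4}}{n}\right) \;\le\; A\!\left(7j+2,\; \frac{4m}{n}\right) \;\le\; A\!\left(7j+2,\; 4\ceil*{m/n}\right).
\end{equation*}
Hence $A(7j+2,\,4\ceil{m/n}) > \log n$, so by \Cref{def:invAck} we conclude $\alpha(m,n) \le 7j+2 \le \tfrac{7h}{2}+2$, i.e., $h \ge \tfrac{2}{7}(\alpha(m,n)-2)$.

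Finally, combining the two bounds $h\ge k$ and $h \ge \tfrac{2}{7}(\alpha(m,n)-2)$ yields $h = \Omega(\alpha(m,n)-k)$ (the statement is only nontrivial when $\alpha(m,n)>k$, in which case the second bound dominates). The main subtle point I expect to need care with is the case analysis between $h$ even and odd together with the careful accounting that $4\ceil{m/n}\ge 4$ so that \Cref{clm:ackermann} applies throughout; the hypothesis $m<n^2/32$ in the corollary statement is used to ensure that the interval containing $k$ in the tradeoff is nonempty and that we are away from the trivial regime where $\alpha(m,n)$ itself is $O(1)$ and the conclusion is vacuous.
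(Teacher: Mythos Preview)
Your proposal is correct and follows essentially the same route as the paper: both arguments invoke the refined lower bound from \Cref{thm:k}, then use \Cref{clm:ackermann} iteratively to trade the $2^{O(k)}$ (or $2^{O(h)}$) factor in the second argument of $A$ for an additive shift in the first argument, and finally read off the conclusion from \Cref{def:invAck}. The only cosmetic difference is that the paper carries out the chain of inequalities in terms of the parameter $k$ from the statement (and implicitly uses $h\ge k$ at the end), whereas you work directly with the hop-diameter $h$; your version in fact yields the marginally cleaner bound $h=\Omega(\alpha(m,n))$ outright, which of course implies $h=\Omega(\alpha(m,n)-k)$.
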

\begin{proof}
The value $k$ as in the statement always exists since
for $k=0$, we have $\frac{n}{2^{6\floor{k/2}+4}} \alpha_{k}(n) \ge \frac{n^2}{32} > m$ and the sequence $\{\frac{n}{2^{6\floor{k/2}+4}} \alpha_{k}(n)\}_{k=0}^\infty$ is decreasing.

Similarly to the proof of \Cref{lemma:tradeoff}, we can lower bound $k$ as follows.
\begin{align*}
k &\ge \min\left\{i \mid A(i, \min\left\{s \mid A(k, s) \ge n \right\}) \ge n \right\}\\
&\ge \min\left\{i \mid A(i, \alpha_{2k}(n)) \ge n \right\} &\text{ (by definition of $\alpha_{2k}(n)$)}\\
&\ge \min\left\{i \mid A(i, \alpha_{2k}(n)) > \log{n} \right\}\\
&\ge \min\left\{i \mid A(i, \alpha_{k}(n)) > \log{n} \right\}\\
&\ge \min\left\{i \mid A\left(i, 2^{6\floor{k/2}+4}\cdot\frac{m}{n}\right) > \log{n}\right\}\\
&\ge\min\left\{i \mid A\left(i + 6\floor{k/2}+4,\frac{m}{n}\right) > \log{n}\right\} &\text{(\Cref{clm:ackermann})}\\
&\ge\min\left\{i \mid A\left(i,\frac{m}{n}\right) > \log{n}\right\} - 6\floor{k/2}-4\\
&\ge\min\left\{i \mid A\left(i,4\ceil*{\frac{m}{n}}\right) > \log{n}\right\} - 6\floor{k/2}-4\\
&\ge\alpha(m,n) - 6\floor{k/2}-4\\
&= \Omega(\alpha(m,n) - k)
\end{align*}
This completes the proof of \Cref{cor:tradeoff}.
\end{proof}

\end{document}